\documentclass[a4paper,UKenglish,cleveref, autoref, thm-restate,numberwithinsect]{lipics-v2021}

\pdfoutput=1 %
\hideLIPIcs  %

\graphicspath{{./graphics/}}%

\title{Paged Geophylogenies: A Coloring Approach to External Labeling with Tree Constraints} %

\titlerunning{Paged Geophylogenies} %

\author{Thomas Depian}{TU Wien, Austria}{tdepian@ac.tuwien.ac.at}{https://orcid.org/0009-0003-7498-6271}{Project No.\ 10.47379/ICT22029 of the Vienna Science and Technology Fund (WWTF).}

\author{Thomas C. van Dijk}{TU Eindhoven, The Netherlands}{t.c.v.dijk@tue.nl}{https://orcid.org/0000-0001-6553-7317}{}

\author{Martin N{\"o}llenburg}{TU Wien, Austria}{noellenburg@ac.tuwien.ac.at}{https://orcid.org/0000-0003-0454-3937}{Project No.\ 10.47379/ICT22029 of the Vienna Science and Technology Fund (WWTF).}

\authorrunning{T. Depian, T.\,C. van Dijk and M. N{\"o}llenburg} %

\Copyright{Thomas Depian, Thomas C. van Dijk and Martin N{\"o}llenburg} %

\ccsdesc[300]{Human-centered computing~Dendrograms}
\ccsdesc[300]{Human-centered computing~Geographic visualization}
\ccsdesc[300]{Theory of computation~Design and analysis of algorithms}

\keywords{boundary labeling, phylogenetic trees, multi-page layout} %

\supplementdetails{Source Code}{https://doi.org/10.17605/OSF.IO/RBQD3}

\nolinenumbers %

\EventEditors{Maarten L\"{o}ffler and Silvia Miksch}
\EventNoEds{2}
\EventLongTitle{34th International Symposium on Graph Drawing and Network Visualization (GD 2026)}
\EventShortTitle{GD 2026}
\EventAcronym{GD}
\EventYear{2026}
\EventDate{August 19--21, 2026}
\EventLocation{Ontario, Canada}
\EventLogo{}
\SeriesVolume{396}
\ArticleNo{8}
\usepackage{mathtools}
\usepackage{complexity}
\usepackage{todonotes}
\usepackage{xspace}
\usepackage{pifont}
\usepackage{cite}

\usepackage{xcolor}
\definecolor{defblue}{rgb}{0.121,0.47,0.705}
\hypersetup{colorlinks=true,
    linkcolor=defblue,
    anchorcolor=defblue,
    citecolor=defblue,
    filecolor=defblue,
    menucolor=defblue,
    urlcolor=defblue,
    bookmarksopen=true,
    bookmarksopenlevel=2,
    bookmarksnumbered=true,
    plainpages=false
    }

\crefname{Con}{Con}{Cons}
\Crefname{Con}{Con}{Cons}

\theoremstyle{claimstyle}
\newtheorem{obclaim}[claim]{Observation}

\newboolean{long}
\newcommand{\AppendixSymbol}{\ding{72}}
\setboolean{long}{true} %
\ifthenelse{\boolean{long}}{
	\NewDocumentEnvironment{prooflater}{m}{\begin{proof}}{\end{proof}\ignorespacesafterend}
	\NewDocumentEnvironment{proofsketch}{o +b}{}{\ignorespacesafterend}
	\newcommand{\restateref}[1]{}
	\NewDocumentEnvironment{statelater}{m}{}{}

	\NewDocumentCommand{\onlyShort}{+m}{}
	\NewDocumentCommand{\onlyLong}{+m}{#1}
	\NewDocumentCommand{\shortLong}{+m +m}{#2}
}{
	\NewDocumentEnvironment{prooflater}{m +b}{%
		\expandafter\global\expandafter\def\csname#1\endcsname{\begin{proof}#2\end{proof}}%
	}{\ignorespacesafterend}
	\NewDocumentEnvironment{proofsketch}{O{Proof sketch.}}{\begin{proof}[#1]}{\end{proof}\ignorespacesafterend}
	\usepackage{apptools}

	\newcommand{\restateref}[1]{[\IfAppendix{\AppendixSymbol{}}{\AppendixSymbol{}}]}
	
	\NewDocumentEnvironment{statelater}{m +b}{%
		\expandafter\global\expandafter\def\csname#1\endcsname{#2}%
	}{\ignorespacesafterend}

	\NewDocumentCommand{\onlyShort}{+m}{#1}
	\NewDocumentCommand{\onlyLong}{+m}{}
	\NewDocumentCommand{\shortLong}{+m +m}{#1}
}
\let\oldrestatable\restatable
\def\restatable{\expandafter\oldrestatable}

\NewDocumentCommand{\parent}{o m}{\ensuremath{\text{pa}\IfNoValueF{#1}{_{#1}}(#2)}\xspace}
\NewDocumentCommand{\children}{o m}{\ensuremath{\text{ch}\IfNoValueF{#1}{_{#1}}(#2)}\xspace}
\NewDocumentCommand{\treeroot}{o}{\ensuremath{\rho\IfNoValueF{#1}{(#1)}}\xspace}
\NewDocumentCommand{\lca}{o m m}{\ensuremath{\text{lca}\IfNoValueF{#1}{_{#1}}(#2, #3)}\xspace}

\newcommand{\labeling}{\ensuremath{\mathcal{L}}\xspace}
\newcommand{\leader}{\ensuremath{\lambda}\xspace}

\newcommand{\instance}{\ensuremath{G}\xspace}
\newcommand{\instancelong}{\ensuremath{(T, S, R)}\xspace}

\newcommand{\ILPCrossing}{\textsf{ILPCross}\xspace}
\newcommand{\ILPColors}{\textsf{ILPColors}\xspace}
\newcommand{\ILPColorsSym}{\textsf{ILPColorsSym}\xspace}
\newcommand{\ILPColorsRep}{\textsf{ILPColorsRep}\xspace}
\newcommand{\ILPColorsPop}{\textsf{ILPColorsPop}\xspace}

\newcommand{\Size}[1]{\ensuremath{\left\vert #1 \right\vert}}
\newcommand{\BigO}[1]{\ensuremath{\mathcal{O}(#1)}}

\newcommand{\NewText}[1]{{\color{black}#1}}

\begin{document}

\maketitle

\begin{abstract}
\emph{Geophylogenies} are a common type of diagram for visualizing the evolutionary history of species in a geographic context.
As a drawing problem, these diagrams are commonly modeled as a rooted binary ``phylogenetic'' tree $T$ where every leaf is associated with a point feature (``site'') in a rectangular map range.
The tree is drawn downward planar, its leaves are placed at equidistant positions on the upper boundary of the map, and each leaf is connected to its corresponding site by a straight-line leader.
Prior work focuses on minimizing the number of leader crossings, or avoiding leaders altogether.
In this paper, we explore \emph{paged} geophylogenies, where the leaders can be partitioned into multiple pages where only crossings within a page are counted.
For the general case, where each page can contain an arbitrary subset of the leaders, we provide an integer linear programming (ILP) formulation for minimizing the number of pages, and a polynomial-time algorithm for a special case that is equivalent to a one-sided tanglegram.
We argue that, from a visualization perspective, instead each page must contain only leaders for a single subtree, and provide an \BigO{n^7} time algorithm for minimizing pages in this setting -- which also involves improving the fastest known algorithm for testing the existence of a crossing-free labeling.
Counter to what our worst case bound suggests, our implementation can handle instances with hundreds of sites within a second, as we show in our experimental evaluation, which further investigates the trade-offs between leader crossings and the number of pages.

\end{abstract}

\section{Introduction}
Labeling feature points on a map is a recurring and time-consuming task in the design of scientific illustrations and geographic information systems~\cite{NNR.RCL.2017,BNN.ELF.2021}.
Usually, \emph{labels} are placed close to their feature point (also called \emph{site})~\cite{AvKS.Lpm.1998,vKSW.Pls.1999},
but this becomes cluttered or infeasible if there are many sites and\NewText{, furthermore,} obscures background features~\cite{Bri.RGS.1990}.
\emph{External labeling} avoids this by placing the labels outside the map and connecting them to their sites using \emph{leader} lines~\cite{BNN.ELF.2021}.

Existing work often focuses on \emph{boundary labeling}, which is a special case of external labeling in which the labels are placed along the (rectangular) boundary of the map and leaders must be crossing-free~\cite{BHKN.AMC.2009,KLW.LBL.2014,GHN.MBL.2015}.
Efficient algorithms for finding a labeling with, e.g., minimum leader length are known~\cite{BKSW.BlM.2007,BHKN.AMC.2009}, and the literature proposes various approaches to place the labels~\cite{HPL.BLF.2014,BNT+.BLC.2024} or leader styles~\cite{BGNN.rlb.2019,BKNS.BLO.2010}; see also the book by Bekos, Niedermann, and Nöllenburg~\cite{BNN.ELF.2021}.
In recent years, there %
\NewText{has been} growing interest in the (algorithmic) study of boundary labeling under additional constraints.
In particular, Niedermann, Nöllenburg, and Rutter~\cite{NNR.RCL.2017} suggested incorporating grouping constraints into the model and Depian, Nöllenburg, Terziadis, and Wallinger~\cite{DNTW.Cbl.2025,DNTW.CBL.2024} systematically studied boundary labeling under grouping and ordering constraints.
The existence of a crossing-free labeling in the presence of such constraints is no longer guaranteed and finding one is \NP-\NewText{complete} in general~\cite{DNTW.Cbl.2025}.

Klawitter, Klesen, Scholl, Van Dijk, and Zaft~\cite{KKS+.VGI.2023,KKS+.VGI.2025} study a specific kind of hierarchical grouping constraints arising from drawing phylogenetic trees.
A \emph{phylogenetic tree} is a rooted binary tree $T$ that models the evolutionary history of \emph{taxa}\NewText{, i.e., groups of organisms}, which are represented by the leaves $L(T)$ of $T$.
Drawings of phylogenetic trees are, with some distinctions, also known as \emph{cladograms} or \emph{dendrograms}.
Klawitter et al.\ study the problem of drawing a \emph{geophylogeny}~$G$ where each leaf has an associated geographic location.
They formalize it as follows: $\instance = \instancelong$ consists of a phylogenetic tree $T$ where each leaf is associated with a distinct site from $S$ that lies inside the map region $R$ which could, for example, represent the location where a taxon was found; see \Cref{fig:example} for an example from the biology literature.
They consider downward planar drawings of $T$, i.e., \emph{embeddings} of $T$, such that the $n$ leaves $L(T)$ are uniformly spaced along the top boundary of $R$.
Different embeddings of the tree induce different labelings, and they study the problem of finding a good embedding for various optimization goals using straight-line and L-shaped leaders.
In particular, they present an $\BigO{n^6}$-time algorithm to determine whether a crossing-free labeling exists and show that crossing-minimization is in general \NP-hard~\cite{KKS+.VGI.2025}.
Furthermore, they provide parameterized and greedy algorithms along with an ILP formulation for the crossing-minimization problem.%
\begin{figure}
    \centering
    \includegraphics[width=\linewidth]{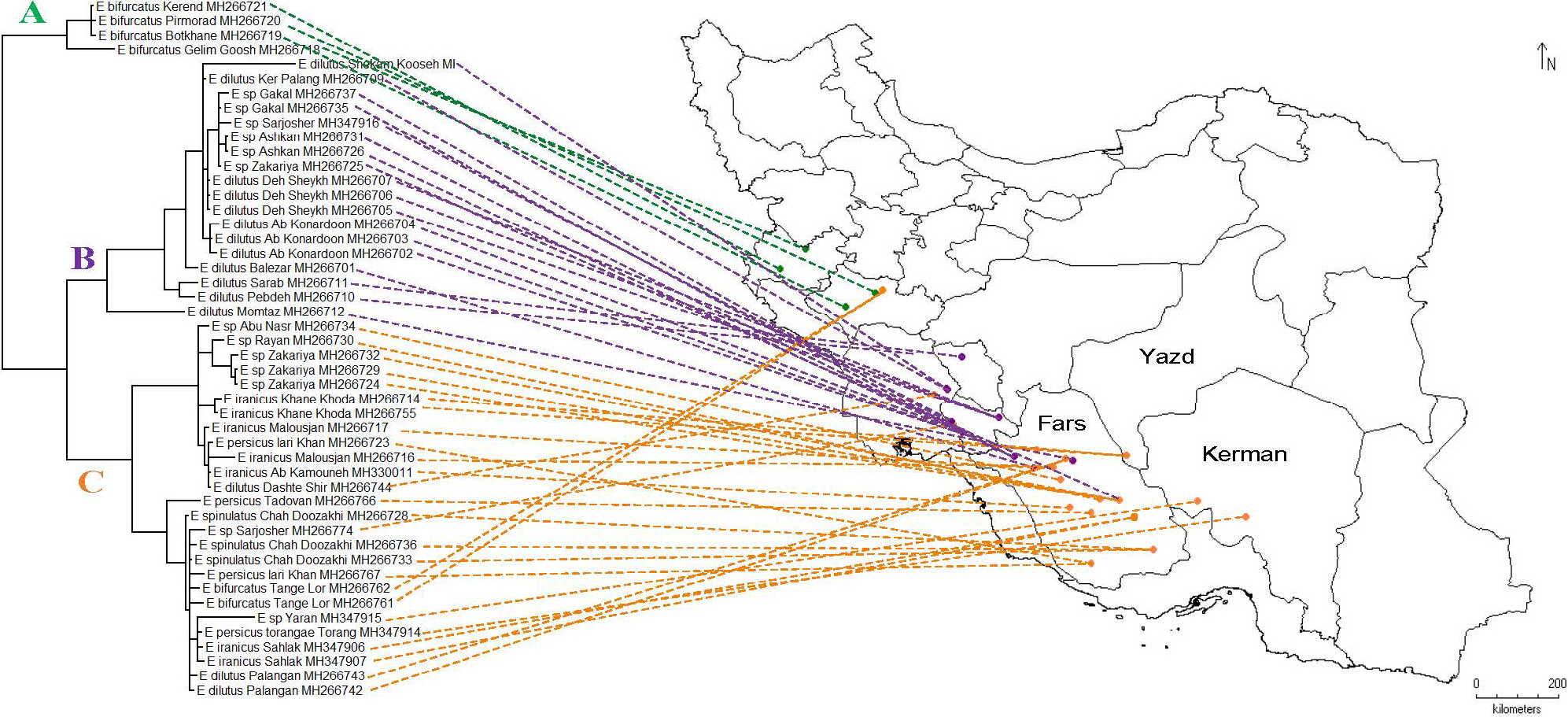}
    \caption{Drawing of a geophylogeny depicting \emph{Eremogryllodes}.
    Notice the use of color for three subtrees.
    Figure reproduced from Tahami, Hojat-Ansari, Mina, Namyatova, and Sadeghi~\cite{Tahami2021}.}
    \label{fig:example}
\end{figure}

While crossing minimization is a standard and well-justified goal to improve the readability of a drawing~\cite{Pur.WAh.1997} (or, as in our case, of a labeling), it is far from the only approach.
Consider, for example, \Cref{fig:intro}b, which shows a crossing-optimal labeling of a real-world geophylogeny.
Although it is an improvement over the unoptimized labeling from \Cref{fig:intro}a as it occurred in the biology literature, following individual leaders in the center of the map can be difficult.
\emph{Beyond planarity}~\cite{DLM.SGD.2019} is a relatively young concept in graph drawing that attempts to deal with drawings where crossings are unavoidable.
In particular, the concept of \emph{thickness} has been studied extensively, where edges are colored with the smallest possible number of colors such that there are no monochromatic crossings~\cite{DM.RGT.2018,MOS.TGS.1998,JRRS.GT2.2023,dffgn-ptgt-25}. %
However, to the best of our knowledge, applying this methodology to boundary labeling problems is novel.

With this paper, we build a bridge between these two fields of graph drawing.
More concretely, we apply the thickness metaphor to the problem of labeling geophylogenies:
for a given geophylogeny $\instance = \instancelong$, we want to find an embedding of the phylogenetic tree~$T$ together with a coloring $\varphi$ of its leaves such that the induced labeling using straight-line leaders has no monochromatic crossings, where the color of a leader is inherited from the leaf it attaches to; see \Cref{fig:intro}c for an example.
Note that an optimal labeling could have more crossings than a crossing-minimal labeling -- but no two leaders of the same color cross.

\NewText{Our approach is} closely related to Gedicke, Bonerath, Niedermann, and Haunert~\cite{GBNH.ZME.2021}, who studied the problem of labeling feature-dense maps by distributing the labels across several independent \emph{pages}, formed by the color classes in our model.
\Cref{fig:intro}c has one particular downside in light of the biological application.
The color classes do not correspond with the (biologically meaningful) hierarchy of the phylogenetic tree $T$: two closely related taxa can be on different pages whereas two unrelated taxa are on the same page.
It would be more appropriate if every page features precisely a subtree of $T$, i.e., if the coloring of the leaves is \emph{consistent} with $T$ as in Figures~\ref{fig:example} and \ref{fig:intro}d.
\NewText{
In particular for \emph{tree-consistent} colorings, interpreting colors as individual pages enables two different views on the same labeling:
On the one hand, we can draw the entire labeling at once, i.e., overlay differently-colored leaders.
Here, colors can increase the readability of such \emph{overview drawings} as each color class is crossing-free; compare also Figures~\ref{fig:intro}b and \ref{fig:intro}d.
On the other hand, we can partition the labeling along the colors into literal pages.
Here, each page, i.e., color, focuses on a different subtree of $T$; compare also Figures~\ref{fig:intro}d and \ref{fig:intro}e.
From the perspective of the visualization, these two approaches are substantially different:
For the former approach, research on color use in visualizations puts an upper limit of around seven~\cite{Mac.Uce.1999} to at most ten~\cite{War.CFC.2021} on the number of colors that should be used when reliably encoding nominal data.
The latter approach can potentially handle a larger number of ``colors'' but for readability it still makes sense to minimize them. %
From an algorithmic perspective, there is no difference between these approaches.
Therefore, we will use the terms \emph{color} and \emph{page} interchangeably to address a (crossing-free) subset of the labeling.

}

\begin{figure}
	\centering
	\includegraphics[page=1]{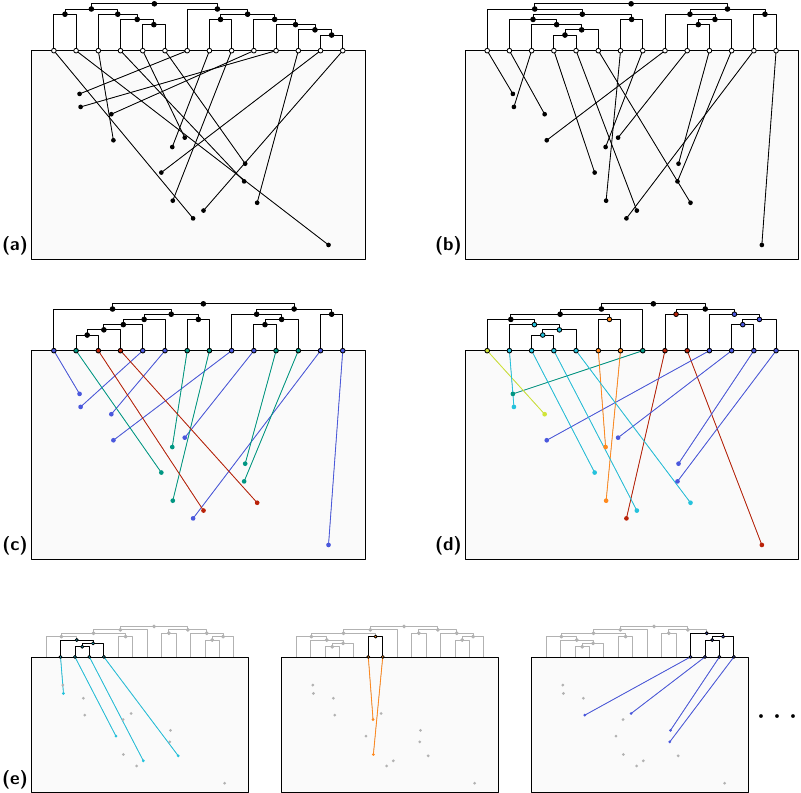}
	\caption{External labeling of a real-world geophylogeny on fourteen different fish species by Williams and Johnson (label text omitted for clarity). \textbf{\textsf{(a)}} Labeling induced by the embedding from~\cite{fish} \textbf{\textsf{(b)}} crossing-minimal labeling produced by the algorithm from Klawitter et al.~\cite{KKS+.VGI.2025}. 
		An optimal \textbf{\textsf{(c)}} unrestricted and \textbf{\textsf{(d)}}~tree-consistent coloring of the geophylogeny produced by the algorithms presented in this paper.
        \NewText{\textbf{\textsf{(e)}} Three pages of a tree-consistent coloring of the geophylogeny with a labeling for each color that contains only the sites represented by the respective subtree.
        Note that we maintain the same embedding across the different labelings, which we have taken from (d).}}
	\label{fig:intro}
\end{figure}

\subparagraph{Contributions.}
In this paper, we explore \emph{paged} geophylogenies, which formalize the above idea. %
We provide two efficient algorithms for practically relevant settings.
In \Cref{sec:tanglegram}, we consider so-called \emph{geometry-free} geophylogenies, where leader crossings are purely determined by the embedding of $T$; the exact geometric position of the sites are irrelevant~\cite{KKS+.VGI.2025}.
They can occur, for example, when the sites follow a river and are therefore almost 1-dimensionally spaced.
Since our polynomial-time algorithm equivalently finds a minimum-thickness embedding of one-sided tanglegrams~\cite{FKP.Ctv.2010}, we believe it may be of independent interest.
In \Cref{sec:tree-consistent}, we investigate tree-consistent colorings.
We present a dynamic programming algorithm that can find a %
minimum tree-consistent colored embedding of general geophylogenies in polynomial time. %
An important subroutine of our algorithm is a planarity test for sub-geophylogenies.
Therefore, we also revisit this special variant of our problem and improve the existing $\BigO{n^6}$-time algorithm to $\BigO{n^5}$.
The improvement is based on a new and arguably simpler technique to handle crossings in partial labelings.
We address in \Cref{sec:extensions} our initial motivation for this approach: In a tree-consistent coloring, each color class can represent a page that provides a planar labeling of a sub-geophylogeny.
With less information per page, more space becomes available, and we give polynomial-time algorithms for optimizing the quality of the labeling on each page, such as its leader length, by making better use of the available space.
In \Cref{sec:ilp}, we provide integer linear programming formulations %
and report in \Cref{sec:experiments} on an extensive experimental evaluation of the performance of our algorithms and the ``price'' of tree-consistency in terms of required colors.

\onlyShort{
\smallskip
\noindent
\emph{Full details for statements marked by \AppendixSymbol{} can be found in the %
\NewText{full version~\cite{ARXIV}}.}
}

\section{Preliminaries}
\label{sec:preliminaries}
For an integer $p \geq 1$, let $[p] \coloneqq \{1, 2, \ldots, p\}$.
See \Cref{fig:notation} for an example of the following notation, which follows previous work on labeling geophylogenies~\cite{KKS+.VGI.2025}.

\begin{figure}[t]
	\centering
	\includegraphics[page=1]{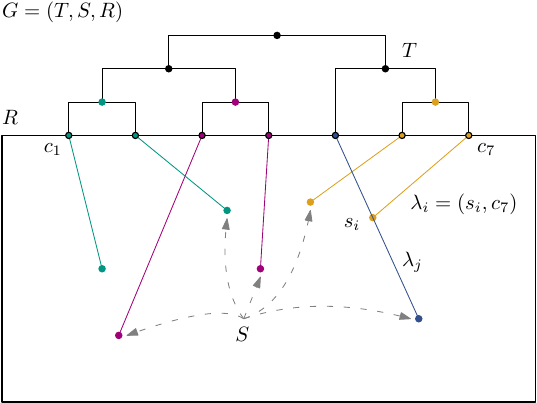}
	\caption{Example of a colored embedded geophylogeny $\instance = \instancelong$ explaining our notation.
    Observe that the leader $\leader_j$ crosses leader $\leader_i$ at its site $s_i$.}
	\label{fig:notation}
\end{figure}

\subparagraph*{Geophylogenies.}
Let $T$ be a phylogenetic tree on $n$ taxa, i.e., a binary tree with $n$ leaves $\ell_1, \ldots, \ell_n$.
We let $L(T)$ and $I(T)$ be the leaves and internal nodes of $T$ respectively, and $V(T) \coloneqq L(T) \cup I(T)$.
For an internal node $v \in I(T)$, its children are $\children[T]{v}$.
For any node $v \in V(T)$, let $T(v)$ denote the \emph{subtree} rooted at $v$ and define $n(v) \coloneqq \Size{L(T(v))}$.
In the following, we omit the reference to $T$ if it is clear from the context.
An \emph{embedding} gives an order of $\children[T]{v}$ for every $v \in I(T)$ and thereby defines a left-to-right order of $L(T)$; let $\Pi(T)$ be the set of leaf orders achievable by an embedding.
We let $\lca[T]{\ell_i}{\ell_j}$
denote the \emph{lowest common ancestor} of the two leaves $\ell_i$ and $\ell_j$ in $T$. %

A \emph{geophylogeny} $\instance = \instancelong$ consists of a phylogenetic tree $T$ together with a set $S$ of $n$ \emph{sites} $s_1, \ldots, s_n$ in $\mathbb{R}^2$ enclosed in an axis-aligned rectangle $R$ called the \emph{map}.
The sites have a one-to-one relation with $L(T)$ which we let, without loss of generality, be implicitly defined via the indices: leaders go from leaf $\ell_i$ to site $s_i$.
We will therefore refer to sites by their leaf and vice versa if there is no risk of confusion. 

\subparagraph*{Paged Labelings.}
Let $\instance$ be a geophylogeny and let $C(R) = \{c_1, \ldots, c_n\}$ be a set of~$n$ \emph{reference points} ordered left to right along the top boundary of $R$, spaced equidistantly, i.e., $x(c_1) < x(c_2) < \ldots < x(c_n)$; the $c_i$ are also called \emph{candidates} and we drop the reference to $R$ if it clear from the context.
A \emph{labeling} $\labeling \colon S \to C$
connects each $s_i \in S$ with the candidate $\labeling(s_i) = c_j$ using an \emph{s-leader} $\lambda_i = (s_i, c_j)$: a line segment from $s_i$ to~$c_j$.
We say that a site $s_i$ is labeled at the candidate $c_j$.
Note that a labeling $\labeling$ induces a left-to-right order $\pi$ of the sites based on the candidates they are labeled at.
The labeling is \emph{admissible} if $\pi \in \Pi(T)$.
Unless stated otherwise, we are interested in admissible labelings, which are uniquely defined (or: \emph{induced}) by an embedding $\pi \in \Pi(T)$ of $T$.
For an embedding $\pi$, let $\labeling_{\pi}$ denote the labeling induced by $\pi$.
A labeling is \emph{planar} if no two leaders cross, including at sites.

Let $\labeling$ be a labeling of $\instance$.
A $k$-\emph{coloring} $\varphi$ assigns to each %
leaf one of $k$ colors, such that there is no \emph{monochromatic crossing}, i.e., no crossing between two leaders $\lambda_i = (s_i, c_p)$ and $\lambda_j = (s_j, c_q)$ with $\varphi(\ell_i) = \varphi(\ell_j)$.
Since sites and leaves (at candidates) are connected via leaders, we equivalently refer to the color of sites and leaders.
A coloring is \emph{tree-consistent} if for every color $i \in [k]$ there exists a node $v \in V(T)$ such that precisely the leaves $L(T(v))$ have color $i$; %
see for example \Cref{fig:notation}, %
where internal nodes are colored to emphasize tree-consistency.
A \emph{colored embedding} $(\pi, \varphi)$ of \instance is an embedding $\pi \in \Pi(T)$ together with a coloring $\varphi$ of $L(T)$. %
It is \emph{minimum} if it uses the fewest colors among all colored embeddings.

\section{Unrestricted Colorings} %
\label{sec:tanglegram}

Since we are primarily interested in tree-consistent colorings, we only briefly discuss the unrestricted colored embedding problem for geophylogenies.
Given the \NP-hardness of coloring grounded segment graphs~\cite{DKMW.GLG.2023,Ung.kCC.1988}, %
we suspect our problem may be \NP-hard in general -- but leave this as an open problem.
Klawitter et al.\ define the \emph{s-region} of a site to be the triangle formed by itself, and the leftmost and rightmost candidate label positions:
a geophylogeny is \emph{geometry-free} if and only if no site is within another site's s-region.
In this section, we show that for geometry-free geophylogenies and, equivalently~\cite{KKS+.VGI.2025}, one-sided tanglegrams~\cite{FKP.Ctv.2010} the coloring problem can be solved in \BigO{n^3\log n} time; see \Cref{fig:tanglegram}.

\begin{figure}[t]
	\centering
	\includegraphics[page=1]{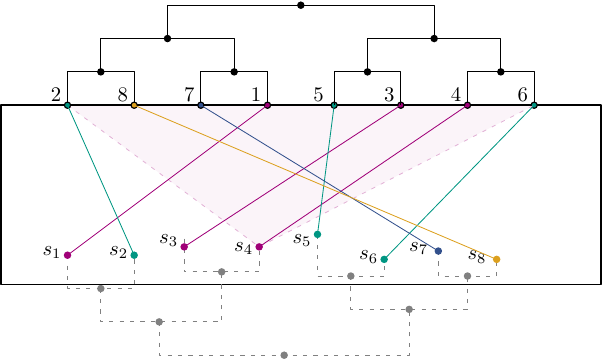}
	\caption{A colored embedded geometry-free geophylogeny. Observe that a longest decreasing subsequence $(8,7,5,4)$ defines the number of colors since their leaders pairwise cross. Also indicated are the $s$-region of site $s_4$ and the fixed tree in the corresponding tanglegram instance.}
	\label{fig:tanglegram}
\end{figure}

Without loss of generality, consider the leaves of $T$ to be bijectively labeled with the integers $[n]$ to indicate the left to right order of the sites (resp.\ the leaf order in the fixed tree in the tanglegram).
Recall that a labeling of $T$ is equivalent to a permutation $\pi\in\Pi(T)$ and an assignment of leaves to candidate label positions. %
In the geometry-free setting, leader crossings correspond to inversions in $\pi$.
Indeed, the intersection graph of the leaders is a permutation graph.
These are known to be perfect, and their largest clique is given by the longest decreasing subsequence in the permutation~\cite{Gol.AGT.1980}.
\begin{restatable}\restateref{thm:MLDS}{theorem}{thmMLDS}
    \label{thm:MLDS}
    Given a binary tree with $n$ leaves, which are are identified with the integers~$[n]$, the embedding that minimizes the longest left-to-right decreasing subsequence of leaf values can be found in \BigO{n^3\log n} time.
\end{restatable}

\newcommand{\snapup}[2]{\ensuremath{\mathrm{up}(#1,#2)}}
\newcommand{\snapdown}[2]{\ensuremath{\mathrm{down}(#1,#2)}}
\begin{proof}

For a node $v$, consider all embeddings of $T(v)$ and the longest decreasing subsequence each implies: let $f(v,a,b)$ be the minimum length of a longest decreasing subsequence taken over the permutations $\Pi(T(v))$, where any values outside of $[a,b]$ are ignored for the subsequence.
Then by definition $f(\text{root}(T),1,n)$ solves the full problem.
Note that $f$ has optimal substructure.
The optimum is witnessed by an embedding $\pi$ of the tree, and $\pi$ implies a range of integers for every subtree that the longest decreasing subsequence conforms to;
if any subtree has a strictly better solution in a particular range, it can be substituted into the purportedly optimal solution.

Abusing notation to write the leaf itself to mean its integer value, observe that a leaf $\ell$ trivially has $f(\ell,a,b)=1$ if $\ell\in [a,b]$, and $0$ otherwise.
For an internal node $v$, any decreasing subsequence in an embedding of $T(v)$ starts in the subtree that is embedded to the left, and continues in the other subtree using only smaller values.
Therefore
\begin{equation}\label{eq:lds-dp}
    f(v,a,b) = \min\Big\{\ \max_{s\in[a,b]} f(c_1,a,s) + f(c_2,s, b),\quad \max_{s\in[a,b]} f(c_2,a,s) + f(c_1,s, b)\ \Big\}.
\end{equation}
A straightforward analysis of computing $f(\text{root}(T),1,n)$ by dynamic programming gives a runtime bound of \BigO{n^4}, since there are \BigO{n^3} distinct combinations of arguments, and $f$ can be evaluated in \BigO{n} time. %
However,
using some structural properties of $f$ the maximum in Equation~\eqref{eq:lds-dp} can be evaluated by looking at only \BigO{\min\{n(c_1),n(c_2)\}} terms.
In particular, look at the case of putting $c_1$ to the left; the other case is symmetric.
Then we evaluate
\begin{equation}\label{eq:one-max}
    \max_{s\in[a,b]} f(c_1,a,s) + f(c_2,s, b).
\end{equation}
First, note that we include $s$ in the range for both recursive calls, but this is harmless since the leaf $s$ occurs in at most one of the subtrees and thus cannot be double counted.
Now consider the case where $a$ or $b$ does not occur in $T(v)$: we can snap it up (resp. down) to the first integer that does, without changing the function value -- in terms of values that could go in decreasing subsequences, the situation described is equivalent.

\begin{obclaim}
    Let $\snapup{v}{i}=\min\{\ell\in L(v) \colon \ell\geq i\}$; $\snapdown{v}{i}=\max\{\ell\in L(v) \colon \ell\leq i\}$.
    Then $f(v,a,b)=f(v,\snapup{v}{a},\snapdown{v}{b})$.
\end{obclaim}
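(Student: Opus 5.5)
The claim follows directly from the definition of $f$. The point is that $f(v,a,b)$ depends on $[a,b]$ only through the set $L(v)\cap[a,b]$ of leaf values that survive the filtering. So I would show that $[a,b]$ and $[\snapup{v}{a},\snapdown{v}{b}]$ have the same intersection with $L(v)$, and then conclude.

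\textbf{Step 1: the two intervals retain the same leaves.} Suppose first that $L(v)\cap[a,b]\neq\emptyset$. Then $\snapup{v}{a}$ is the smallest leaf value of $T(v)$ that is at least $a$; it lies in $[a,b]$, so it is at most $b$. Symmetrically, $\snapdown{v}{b}$ is the largest leaf value of $T(v)$ that is at most $b$, and it is at least $a$. Now take any $\ell\in L(v)$:
\begin{itemize}
\item if $\ell\in[a,b]$, then $\ell\ge a$ gives $\ell\ge\snapup{v}{a}$ by minimality, and $\ell\le b$ gives $\ell\le\snapdown{v}{b}$ by maximality;
\item conversely, if $\ell\in[\snapup{v}{a},\snapdown{v}{b}]$, then $a\le\snapup{v}{a}\le\ell\le\snapdown{v}{b}\le b$.
\end{itemize}
Hence $L(v)\cap[a,b]=L(v)\cap[\snapup{v}{a},\snapdown{v}{b}]$.

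\textbf{Step 2: conclude for each embedding.} For every embedding $\pi\in\Pi(T(v))$, the longest decreasing subsequence of $\pi$ restricted to values in $[a,b]$ is the same as the one restricted to values in $[\snapup{v}{a},\snapdown{v}{b}]$. This is because the restricted sequence depends only on which leaf values are kept, and Step 1 shows the kept sets coincide. Taking the minimum over all $\pi$ gives the claimed equality.

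\textbf{Degenerate case.} This is the only point needing care. If $L(v)\cap[a,b]=\emptyset$, then $f(v,a,b)=0$, and the snapped values may be undefined or may form an empty interval with $\snapup{v}{a}>\snapdown{v}{b}$. Adopting the convention $\min\emptyset=+\infty$ and $\max\emptyset=-\infty$, or simply treating any interval with lower end above its upper end as empty, the filtered set is empty on both sides and both values are $0$. I would state this convention explicitly. Apart from that, the argument is immediate.
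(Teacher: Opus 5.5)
Your proof is correct and follows the paper's own reasoning. The paper justifies the observation in one line: snapping $a$ up and $b$ down to leaves of $T(v)$ does not change which values can appear in a decreasing subsequence, which is exactly your Step~1 identity on $L(v)\cap[a,b]$. You also handle explicitly the empty-intersection case, where the snapped values may be undefined or cross, and the paper leaves that case implicit.
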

This can be used in practice to reduce the number of DP states being evaluated by forwarding to the ``snapped'' values of $a$ and $b$ in constant time (by precomputing \snapup{\cdot}{\cdot} and \snapdown{\cdot}{\cdot}).
For a worst case bound, we further observe about \Cref{eq:one-max} that the maximum must be achieved by a value of $s$ that is snapped to $c_1$, also one snapped to $c_2$.
\begin{claim}\label{lem:check-smaller}
    The maximum in \Cref{eq:one-max} is achieved by some $s_1\in L(c_1)$; it is also achieved by some $s_2$ with $s_2+1\in L(c_2)$. (Possibly $s_1=s_2$.)
\end{claim}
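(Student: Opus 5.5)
We exploit the monotonicity of $f$ in its range arguments. Define $g(s) \coloneqq f(c_1,a,s) + f(c_2,s,b)$ for $s\in[a,b]$. Enlarging the range $[a,b]$ can only enlarge the set of values eligible for a decreasing subsequence. The minimum over embeddings of a monotone quantity stays monotone, so $f(v,a,s)$ is non-decreasing in $s$ and $f(v,s,b)$ is non-increasing in $s$. Both claims then follow by moving an arbitrary maximizer $s^*$ to a suitable nearby value without decreasing $g$.

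\emph{First claim ($s_1\in L(c_1)$).} Let $s^*$ be a maximizer. By the snapping observation, $f(c_1,a,s^*) = f(c_1,a,\snapdown{c_1}{s^*})$. Set $s_1 \coloneqq \snapdown{c_1}{s^*}$, assuming this lies in $[a,b]$. Then the first summand is unchanged. The second summand can only increase, because $s_1\le s^*$ and $f(c_2,\cdot,b)$ is non-increasing. Hence $g(s_1)\ge g(s^*)$, so $s_1$ is also a maximizer and $s_1\in L(c_1)$.

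The degenerate case is when no leaf of $c_1$ lies in $[a,s^*]$. Then $f(c_1,a,s^*)=0$, and $g(s^*) = f(c_2,s^*,b)\le f(c_2,a,b)=g(a)$. So we may take $s^*=a$ and snap from there. This means the statement should really be read with $a$ (and symmetrically $b$) already snapped to $L(v)$, as the preceding observation allows. If $L(c_1)\cap[a,b]=\emptyset$ the claim is vacuous or trivial, and I would note that explicitly.

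\emph{Second claim ($s_2+1\in L(c_2)$).} This is the symmetric argument. Take a maximizer $s^*$ and let $t \coloneqq \snapup{c_2}{s^*}$. Then $f(c_2,s^*,b)=f(c_2,t,b)$, and the snapping observation also gives $f(c_2,t-1,b)=f(c_2,t,b)$, since $t-1\ge s^*$ and no leaf of $c_2$ lies in $[s^*,t-1]$. Put $s_2 \coloneqq t-1\ge s^*$. The first summand $f(c_1,a,s_2)$ can only grow, and the second is unchanged, so $s_2$ is a maximizer with $s_2+1\in L(c_2)$. The boundary cases, where no leaf of $c_2$ is $\ge s^*$ or where $s_2$ exceeds $b$, are handled as before by comparison with the endpoint $b$.

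The main obstacle is not the core inequality but these boundary and off-by-one issues: making sure $s_2=t-1$ stays in $[a,b]$, and that the endpoints are legitimate. I would first show that the endpoint values $g(a)$ and $g(b)$ are each dominated by, or equal to, an interior snapped value. Then I would argue that one maximizer can be moved in both directions, giving the claimed $s_1$ and $s_2$, possibly equal.
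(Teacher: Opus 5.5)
Your core step is the paper's argument: by monotonicity, $f(c_1,a,\cdot)$ only increases at leaves of $c_1$ and $f(c_2,\cdot,b)$ only drops right after leaves of $c_2$, so a maximizer can be slid within its constancy interval. You slide a single maximizer, whereas the paper looks at a maximal interval of maximizers. The genuine gap is the boundary case you flagged, and it cannot be closed, because the statement is false there.

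Your step ``take $s^*=a$ and snap from there'' fails: snapping $a$ down into $L(c_1)$ leaves $[a,b]$ whenever $a\notin L(c_1)$. Reading the claim with $a,b$ snapped to $L(v)$ does not help either, since $a$ may be a leaf of $c_2$. Here is a concrete instance. Let $c_1$ be the single leaf $4$, let $T(c_2)$ be a cherry on $\{1,3\}$ whose sibling is the leaf $2$, and take $a=1$, $b=4$ (both in $L(v)$). All four leaf orders $132,312,213,231$ of $T(c_2)$ contain a decreasing pair, so $g(1)=0+f(c_2,1,4)=2$. On the other hand, $g(2)=g(3)=0+1=1$ (use the order $213$) and $g(4)=1+0=1$. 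So the maximum is attained only at $s=1\notin L(c_1)=\{4\}$, and evaluating only $s\in L(c_1)$ would return $1$ instead of $2$. The mirror instance fails the second part. Take $T(c_1)$ a cherry on $\{2,4\}$ with sibling $3$, $c_2$ the single leaf $1$, and the same $a,b$. Its maximum $2$ is attained only at $s=b=4$, while no $s\in[1,4]$ has $s+1\in L(c_2)=\{1\}$.

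What your sliding argument does prove is the corrected statement. The maximum is attained at some $s_1\in (L(c_1)\cap[a,b])\cup\{a\}$. It is also attained at some $s_2\in[a,b]$ with $s_2+1\in L(c_2)$ or $s_2=b$. Your own inequality $g(s^*)\le g(a)$ already handles the first boundary case. The fix costs one extra term per maximum, so the $\mathcal{O}(n^3\log n)$ runtime bound is unaffected, but the endpoints must be evaluated. The paper's proof has the same oversight: a maximal interval of maximizers may begin at $a$ rather than at a newly added leaf of $c_1$, and may end at $b$.

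There is also a smaller slip in your second part. The inequality $t-1\ge s^*$ fails when $s^*\in L(c_2)$, since then $t=s^*$. The conclusion survives: then $s^*\notin L(c_1)$, so moving to $s^*-1$ keeps the first summand and can only raise the second. It is cleaner to start the second slide from $s_1$. Since $s_1\notin L(c_2)$, we get $t=\mathrm{up}(c_2,s_1)>s_1$ and no leaf of $c_2$ lies in $[s_1,t-1]$. Hence $g(t-1)\ge g(s_1)$, or $g(b)\ge g(s_1)$ if no such $t\le b$ exists.
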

\begin{claimproof}
    For fixed $v$, $f(v,a,b)$ is increasing in $b$ (more options), only changing when $b\in L(v)$; conversely, it is decreasing in $a$ (fewer options), only changing one step after any $a\in L(v)$.
    The terms in \Cref{eq:one-max} are thus a sum of terms from an increasing and a decreasing sequence.
    Consider any interval of values for $s$  such that $f(c_1,a,s)+f(c_2,s,b)$ that realizes the maximum: it starts when a new value in $L(c_1)$ is added, and it ends one before a value in $L(c_2)$ is removed.
    Hence the maximum is witnessed at a value in the left subtree, and (possible for a different value of $s$) also one before a value in the right subtree.
\end{claimproof}
\Cref{lem:check-smaller} lets us evaluate the maxima in \Cref{eq:lds-dp} either by checking the values in $L(c_1)$ \emph{or} the values in $L(c_2)$ minus one -- our algorithm will pick the smaller of the two; this works independently for both maxima.
The time spent to evaluate a particular $f(v,a,b)$ is thus \BigO{\min\{n(c_1),n(c_2)\}}.
Then for fixed $a$ and $b$, aggregating over the entire tree, only \BigO{n\log n} terms are evaluated in the maxima (instead of \BigO{n^2}).
This can be seen by charging the terms to the leaves in the \emph{smaller} subtree (of which there are equally many): a leaf can be ``on the smaller side'' only \BigO{\log n} times, hence the total charge is \BigO{n\log n}.
Summing this over the \BigO{n^2} combinations of $a$ and $b$ yields the theorem.
\end{proof}

\section{Tree-Consistent Colorings}
\label{sec:tree-consistent}

We now consider tree-consistent colorings and show in \Cref{sec:tree-consistent-dp} that we can find a minimum tree-consistent colored embedding in polynomial time.
As a subroutine, we determine whether a 
geophylogeny %
admits a planar labeling.
Klawitter et al.~\cite{KKS+.VGI.2025} show that this can be done in \BigO{n^6} time by dynamic programming over the embedding of $T$.
In the next section, we first present a different DP approach that runs in $\BigO{n^5}$ time.

\subsection{A Faster Algorithm for Geophylogeny Planarity Testing}
\label{sec:planar}
We employ a DP that relies on the following known observation%
~\cite{DNTW.Cbl.2025,BHKN.AMC.2009}; see also \shortLong{\Cref{fig:planarity-merged}a}{\Cref{fig:planarity-split}}.

\shortLong
{
\begin{figure}[t]
	\centering
	\includegraphics[page=1]{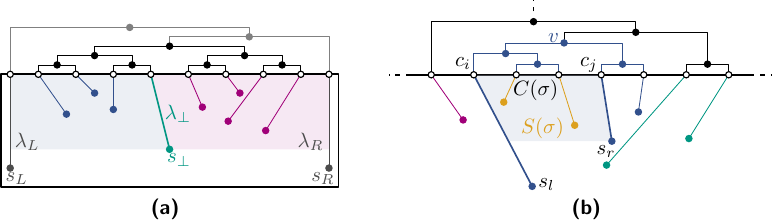}
	\caption{\textbf{\textsf{(a)}} The leader $\leader_{\bot}$ for the bottommost site $s_{\bot}$ splits the labeling \labeling into the blue and the purple sub-labelings. We indicate the sentinel leaves in gray.
    \textbf{\textsf{(b)}} The state $\sigma = (s_l, c_i, s_r, c_j)$ (blue) and the sets $S(\sigma)$ and $C(\sigma)$. For every orange site $s_p \in S(\sigma)$ we have $\ell_p \in L(T(v))$, $v = \lca{\ell_l}{\ell_r}$.
    }
	\label{fig:planarity-merged}
\end{figure}
}
{
\begin{figure}
	\centering
	\includegraphics[page=1]{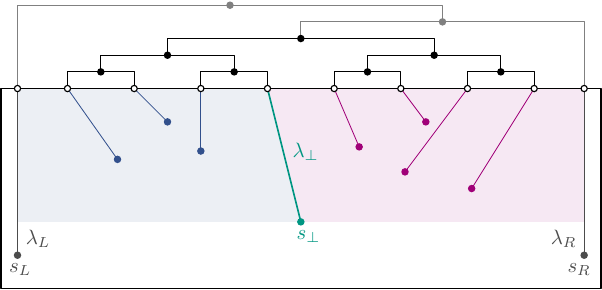}
	\caption{The leader $\leader_{\bot}$ for the bottommost site $s_{\bot}$ splits the labeling \labeling into the blue and the purple sub-labeling. We indicate the sentinel leaves in gray.}
	\label{fig:planarity-split}
\end{figure}
}

\begin{observation}
    \label{obs:planar-bottommost}
    The leader $\leader_{\bot} = (s_{\bot}, c_{p})$ for the bottommost site $s_{\bot} \in S$ splits a planar labeling \labeling into two independent sub-labelings for the sites left and right of $\leader_{\bot}$, respectively.
\end{observation}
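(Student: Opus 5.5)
The leader $\leader_{\bot} = (s_{\bot}, c_p)$ is a segment from $s_{\bot}$ up to the top boundary of $R$. Extending it by the vertical ray from $s_{\bot}$ down to the bottom boundary of $R$ gives a polyline that cuts $R$ into two regions: a left part $R_{\ell}$ and a right part $R_r$. I will show that every other leader lies in one of these two parts and that leaders in different parts cannot cross. Then the restrictions of $\labeling$ to the sites on each side are labelings whose planarity can be checked, and later optimized, separately.

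\textbf{Step 1: the bottom extension is empty.} Every site other than $s_{\bot}$ lies strictly above $s_{\bot}$, since $s_{\bot}$ is the bottommost site (ties in $y$ are broken by a fixed rule, or excluded by general position). Every leader goes from its site upward to a candidate on the top boundary, so it lies entirely in the half-plane above its own site. Hence no leader other than $\leader_{\bot}$ meets the vertical ray below $s_{\bot}$.

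\textbf{Step 2: leaders stay on their side.} Since \labeling is planar, no other leader crosses $\leader_{\bot}$, and by the definition of planarity this includes passing through the site $s_{\bot}$. Together with Step 1, every leader $\leader_j \neq \leader_{\bot}$ misses the whole separating polyline. Being a connected segment, $\leader_j$ therefore lies entirely in $R_{\ell}$ or entirely in $R_r$.

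\textbf{Step 3: consistency of candidates.} If $s_j$ lies in $R_{\ell}$, then so does the leader's endpoint $c_q$, so $x(c_q) < x(c_p)$. Symmetrically, sites in $R_r$ are labeled at candidates to the right of $c_p$. The candidates left of $c_p$ are thus used exactly by the sites in $R_{\ell}$, and those right of $c_p$ exactly by the sites in $R_r$. Any two leaders on opposite sides lie in interiors of disjoint regions and cannot cross. Consequently $\labeling$ is planar if and only if $\leader_{\bot}$ is uncrossed and both sub-labelings are planar. This is the sense of ``independent'' we want.

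The main subtlety I expect is Step 2's boundary case: a leader touching $\leader_{\bot}$ only at an endpoint, either at $s_{\bot}$ or at the shared candidate. The first is excluded because crossings at sites count in the definition of planarity. The second cannot happen because candidates are distinct.

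In the admissible setting, one further remark is needed for the DP: the order constraint from $T$ couples the two sides only through which leaves are assigned to each side, that is, the sentinel leaves. The geometry itself is fully separated, which is all the observation claims.
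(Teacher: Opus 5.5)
Your argument is correct. It is the standard separation argument behind this observation, which the paper states without proof (citing prior work and a figure): extending $\lambda_{\bot}$ by the vertical ray below $s_{\bot}$ gives a top-to-bottom separator that no other leader can touch, so the remaining leaders and the candidates they use split into a left part and a right part.

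Your closing aside is inaccurate, although the observation does not depend on it. The sentinel leaves only serve to encode the whole instance as the state $(s_L, c_L, s_R, c_R)$. The tree constraint also couples the two sides through their internal leaf orders, not only through which leaves lie on which side. That is why the paper needs the consistency conditions, notably (Con~4), and a case analysis to show that two valid sub-states combine into an admissible labeling.
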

This allows us to describe sub-labelings of \labeling via two leaders $\leader_l = (s_l, c_i)$ and $\leader_r = (s_r, c_j)$, $i < j$.
The entire labeling \labeling can be described by adding two sentinel leaves $\ell_L$ and $\ell_{R}$ to~$T$ and two dummy candidates $c_L$ and $c_{R}$ to $C$ that lie at the very left and right side of~$R$.
The respective sites $s_L$ and $s_{R}$ lie %
at the bottom corners of $R$ and the dummy leaders $\leader_L = (s_L, c_L)$ and $\leader_R = (s_{R}, c_{R})$ bound $G$; recall \shortLong{\Cref{fig:planarity-merged}a}{\Cref{fig:planarity-split}}.
\shortLong{
The corresponding scheme of repeatedly finding a candidate for the bottommost site is known in the literature~\cite{DNTW.Cbl.2025,BNN.ELF.2021}, and we now sketch how to use it for geophylogenies; see %
\NewText{the full version~\cite{ARXIV}} for details.
}{
This gives rise to an alternative approach for planarity testing of geophylogenies, namely via a DP over sub-labelings.
Note that a similar approach has already been used by Depian et al.~\cite{DNTW.Cbl.2025} to obtain a boundary labeling that adheres to grouping constraints,which can also be represented as a (not-necessarily binary) tree on the sites.
However, in contrast to Depian et al., who ensured admissibility when evaluating the candidate leaders for the bottommost site, we argue in this section that for geophylogenies, it sufficient to check for admissibility of sub-labelings only.
This results in a single check for each sub-labeling compared to one check per candidate leader (which could be \BigO{n}-many).
}
\onlyShort{%

\subparagraph*{A Glimpse at the DP.}
We represent a DP-\emph{state} $\sigma = (s_l, c_i, s_r, c_j)$ via two leaders $\leader_l = (s_l, c_i)$ and $\leader_r = (s_r, c_j)$, $i < j$, and let $S(\sigma)$ and $C(\sigma)$ denote the sites and candidates between the two leaders, respectively; see also \Cref{fig:planarity-merged}b. 
Our DP maintains $\BigO{n^4}$-sized DP-table $D$, which stores for every state $\sigma = (s_l, c_i, s_r, c_j)$ whether it is \emph{valid}, i.e., $\Size{S(\sigma)} = \Size{C(\sigma)}$ holds and there exists an embedding $\pi\in \Pi(T)$ that induces a planar labeling $\labeling(\sigma)\colon S(\sigma)\cup \{s_l, s_r\} \to C(\sigma) \cup \{c_i, c_j\}$ such that $\labeling(s_l) = c_i$ and $\labeling(s_r) = c_j$.
We observe that for every valid state~$\sigma$ and every $s_p \in S(\sigma)$, we have $\ell_p \in L(T(v))$, i.e., only sites of $T(v)$ can be labeled at $C(\sigma)$; see the orange sites in \Cref{fig:planarity-merged}b.
This gives rise to the concept of a \emph{consistent} state, where (i) $\Size{S(\sigma)} = \Size{C(\sigma)}$ holds, (ii) $\leader_l$ and $\leader_r$ do not cross, (iii) every site $s_i \in S(\sigma)$ satisfies $\ell_i \in L(T(\lca{\ell_1}{\ell_2}))$, and (iv) there exists an embedding $\pi \in \Pi(T)$ with $\pi(\ell_l) = i$, $\pi(\ell_r) = j$ such that for every $\ell_p \in L(T(\lca{\ell_l}{\ell_r}))$ whose site $s_p$ lies above $s_l$ and $s_r$ we have $\pi(\ell_p) < i$, $i < \pi(\ell_p) < j$, or $\pi(\ell_p)$ according to whether $s_p$ lies left of $\leader_l$, between $\leader_l$ and $\leader_r$, or right of $\leader_r$, respectively.
}%
\begin{statelater}{planarSetup}
\subparagraph*{Setting up the DP.}
Let $s_l, s_r \in S$ be two sites and $c_i, c_j \in C$ be two candidates with $i < j$.
The tuple $\sigma = (s_l, c_i, s_r, c_r)$ will denote a \emph{state} of our DP.
Observe that the leaders $\leader_l = (s_l, c_i)$ and $\leader_r = (s_r, c_j)$ together with the top boundary of $R$ induce a trapezoid $\tau(\sigma)$ in $R$.
We let $S(\sigma)$ and $C(\sigma)$ denote the sites and candidates \emph{contained} in $\sigma$, i.e., that lie in $\tau(\sigma)$, excluding $s_l$, $s_r$, $c_i$, and $c_j$; see \shortLong{\Cref{fig:planarity-merged}b}{\Cref{fig:planarity-lca}} for an example.
The state $\sigma$ is \emph{valid} if $\Size{S(\sigma)} = \Size{C(\sigma)}$ holds and there exists an embedding $\pi\in \Pi(T)$ that induces a planar labeling $\labeling(\sigma)\colon S(\sigma)\cup \{s_l, s_r\} \to C(\sigma) \cup \{c_i, c_j\}$ such that $\labeling(s_l) = c_i$ and $\labeling(s_r) = c_j$.
In essence, for $\labeling(\sigma)$ we only care about %
the sites in $S(\sigma)\cup \{s_l, s_r\}$.
In particular, we do not enforce planarity of the entire labeling $\labeling'$ for $S$ induced by the permutation $\pi$.
We now make a crucial observation about valid states; see also \shortLong{\Cref{fig:planarity-merged}b}{\Cref{fig:planarity-lca}}.

\onlyLong{\begin{figure}
	\centering
	\includegraphics[page=1]{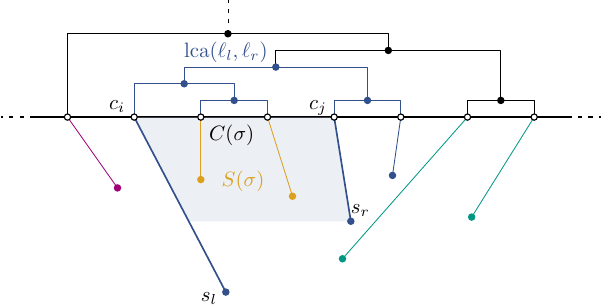}
	\caption{The state $\sigma = (s_l, c_i, s_r, c_j)$ (indicated in blue) and the sets $S(\sigma)$ and $C(\sigma)$. For every orange site $s_p \in S(\sigma)$ we have that $\ell_p \in L(T(\lca{\ell_l}{\ell_r}))$ (blue).}
	\label{fig:planarity-lca}
\end{figure}
}

\begin{observation}
    \label{obs:planar-valid-state}
    Let $\sigma = (s_l, c_i, s_r, c_j)$ be a valid state and $v = \lca{\ell_l}{\ell_r}$.
    For every $s_p \in S(\sigma)$ we have $\ell_p \in L(T(v))$, i.e., only sites of $T(v)$ can be labeled at $C(\sigma)$.
    In particular, we have $\ell_l, \ell_r \in L(T(v))$.
\end{observation}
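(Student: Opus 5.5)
The plan is to argue by contradiction, using the admissibility of the embedding $\pi$ that witnesses validity together with the fact that, in any embedding, the leaves of a subtree occupy a contiguous block of positions. The key structural fact is elementary. For $v = \lca{\ell_l}{\ell_r}$ and any $\pi \in \Pi(T)$, the leaves $L(T(v))$ are placed at consecutive positions of $\pi$, because an embedding only permutes children at each internal node. Hence every leaf whose position lies strictly between $\pi(\ell_l)$ and $\pi(\ell_r)$ belongs to $L(T(v))$, since both endpoints are in this contiguous block.

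First, I fix a valid state $\sigma$ and the embedding $\pi$ and planar labeling $\labeling(\sigma)$ witnessing validity. The statement $\ell_l, \ell_r \in L(T(v))$ is immediate from the definition of the lowest common ancestor, so the content lies in the first claim.

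Second, I take any $s_p \in S(\sigma)$ and show that its leaf $\ell_p$ is labeled at some candidate $c_k$ with $i < k < j$.
\begin{itemize}
\item The labeling $\labeling(\sigma)$ maps $S(\sigma)\cup\{s_l,s_r\}$ bijectively onto $C(\sigma)\cup\{c_i,c_j\}$ and sends $s_l \mapsto c_i$ and $s_r \mapsto c_j$.
\item Therefore $s_p$ is sent into $C(\sigma)$, and $C(\sigma)$ consists exactly of the candidates $c_{i+1},\dots,c_{j-1}$ lying in the trapezoid $\tau(\sigma)$.
\item Since the labeling is induced by $\pi$, the leaf $\ell_p$ occupies position $k$ with $i<k<j$. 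Also $\pi(\ell_l)=i$ and $\pi(\ell_r)=j$.
\end{itemize}
By the contiguity fact above, $\ell_p \in L(T(v))$.

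The main obstacle is making precise the sense in which $\labeling(\sigma)$ is induced by $\pi$. The labeling $\labeling(\sigma)$ is defined only on a subset of the sites, so I need the reading that the candidate assigned to each site equals the one in $\labeling_\pi$, namely $\labeling(\sigma)(s_p)=c_{\pi(\ell_p)}$.

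Under that reading, the bijection is automatic: the sites whose leaves sit at positions $i+1,\dots,j-1$ are exactly those labeled into $C(\sigma)$. The condition $\Size{S(\sigma)}=\Size{C(\sigma)}$ then guarantees that these are precisely the sites of $S(\sigma)$. I would state this reading explicitly before running the argument above, and note that the argument uses neither planarity nor any geometric property of the sites.
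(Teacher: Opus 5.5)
Your proof is correct and follows the argument the paper has in mind. The paper states this as an observation without a written proof, and its implicit justification is the one you give: every embedding places $L(T(v))$ in a contiguous block, and injectivity of the induced labeling forces each $s_p \in S(\sigma)$ to be labeled strictly between $c_i$ and $c_j$, so $\ell_p$ must lie in $T(v)$. Your explicit reading $\mathcal{L}(\sigma)(s_p)=c_{\pi(\ell_p)}$ matches the paper's remark that $\mathcal{L}(\sigma)$ is the restriction of the labeling induced by $\pi$. You are also right that neither planarity nor the geometry of the sites is needed.
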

\Cref{obs:planar-valid-state} gives rise to concept of %
\emph{consistent} states.
A state $\sigma = (s_l, c_i, s_r, c_j)$ is consistent if and only if all of the following conditions hold, where $\leader_l = (s_l, c_i)$ and $\leader_r = (s_r, c_j)$.

\begin{enumerate}[(Con~1)]
	\item We have $\Size{S(\sigma)} = \Size{C(\sigma)}$. \label[Con]{planar:same-cardinality}
	\item The leaders $\leader_l$ and $\leader_r$ do not cross. \label[Con]{planar:crossing-free}
	\item The state $\sigma$ contains no \emph{trapped} site, i.e., no site $s_i \in S(\sigma)$ with $\ell_i \notin L(T(\lca{\ell_1}{\ell_2}))$.%
    \label[Con]{planar:not-trapped}
	\item There exists an embedding $\pi \in \Pi(T)$ of $T$ such that (i) $\pi(\ell_l) = i$, (ii) $\pi(\ell_r) = j$, and (iii) for every leaf $\ell_p \in L(T(\lca{\ell_l}{\ell_r}))$ such that $s_p$ is above $s_l$ and $s_r$ we have 
	\begin{itemize}
		\item if $s_p$ is left of $\leader_l$ then $\pi(\ell_p) < i$,
		\item if $s_p$ is between $\leader_l$ and $\leader_r$ then $i < \pi(\ell_p) < j$, and
		\item if $s_p$ right of $\leader_r$ then $j <\pi(\ell_p)$.
	\end{itemize}
	\label[Con]{planar:embedding}
\end{enumerate}

Observe that \Cref{planar:same-cardinality,planar:crossing-free,planar:not-trapped,planar:embedding} are necessary for the validity of $\sigma$.
\end{statelater}%
\onlyLong{Furthermore, note that for \Cref{planar:embedding}, we only care about the leaves for which the corresponding site lies above the two sites that define the state.
For all other leaves, some leaf in the same subtree or the position of the subtree on the path to $\lca{\ell_l}{\ell_r}$ will dictate their placement in $\pi$.
Moreover, we will take care of these sites in some other (larger) state $\sigma'$.} %
\onlyLong{We need one more notion before we can present our dynamic program.
Finally, let $s_{\bot} \in S(\sigma)$ be the bottommost site in the state~$\sigma$.
A candidate $c_p \in C(\sigma)$ is \emph{feasible} for $s_{\bot}$ if the leader $\leader_{\bot} = (s_{\bot}, c_p)$ does not intersect another site $s_q \neq s_p$.}
\shortLong{ 
To evaluate a DP-state $\sigma$, we check consistency and, if so, whether a candidate $c_p \in C(\sigma)$ is \emph{feasible} for the bottommost site $s_{\bot} \in S(\sigma)$, i.e., the leader $(s_{\bot}, c_p)$ does not intersect any other site, and we have $D[(s_l, c_i, s_{\bot}, c_p)] = D[(s_{\bot}, c_p, s_r, c_j)] = 1$
(if $S(\sigma) = \emptyset$, being consistent is sufficient for $D[\sigma] = 1$).

Verifying Condition~(iv) dominates the consistency checks, since we can precompute lowest common ancestor queries~\cite{GT.LTA.1985} and leader-site intersections. 
To check it in linear time, we compute via a postorder traversal of $T$ the numbers $S(\text{left},u)$, $S(\text{in}, u)$, and $S(\text{right}, u)$ of leaves of $T(u)$ whose sites lie left of $\leader_l$, between $\leader_l$ and $\leader_r$, and right of $\leader_r$, respectively.
Afterwards, we traverse $T$ from $\ell_l$ to $v = \lca{\ell_l}{\ell_r}$ (and symmetrically for $\ell_r$ to $v$).
We maintain counters of leaves which must be placed left of $\ell_l$, between $\ell_l$ and $\ell_r$, and right of~$\ell_r$, which we can determine based on the counter $S(\cdot, w)$ for every node $w$ being a child of a node on the traversal but not on the traversal themselves; see \Cref{fig:planarity-inside-outside-detailed}.
After both traversals, we reject if any counter is larger than the actual available candidates, and verify that there exists an embedding of $T$ that places $T(v)$ at the computed leftmost position.
The existence of such embeddings can be precomputed via a top-down DP of $T$.
We obtain:
\begin{figure}[t]
    \centering
    \includegraphics[page=1]{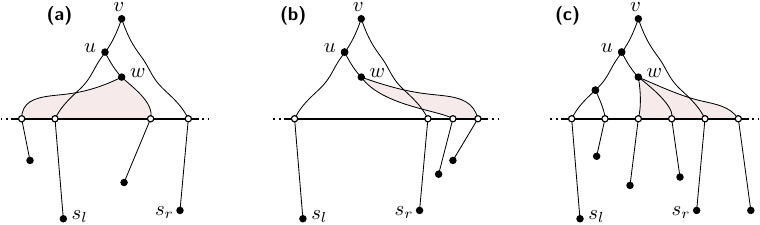}
    \caption{Let $u$ be a node on the path from $\ell_l$ to the lowest common ancestor and $w$ the child which is not on the path.
    Each of the figures shows a setting in which we detect that $\sigma$ is inconsistent since the desired embedding $\pi$ does not exist as indicated with the shaded subtree:
    \textbf{\textsf{(a)}} If $S(\text{in}, w) > 0$ and $S(\text{left}, w) + S(\text{right}, w) > 0$, then $T(w)$ simultaneously contains leaves for sites that are inside and outside $\sigma$.
    \textbf{\textsf{(b)}} If $S(\text{right}, w) > 0$, then $T(w)$ contains a leaf for a site right of $\sigma$; however, we are on a traversal from $\ell_l$ to $v$.
    \textbf{\textsf{(c)}} If $S(\text{in}, w) \neq n(w)$, then only some sites for the leaves of $T(w)$ are $S(\sigma)$ (the others are outside but, for example, below $s_l$ or $s_r$).}
    \label{fig:planarity-inside-outside-detailed}
\end{figure}
}{

We are now ready to define our DP, which maintains a $\BigO{n^4}$-sized DP-table $D$.
For a state $\sigma = (s_l, c_i, s_r, c_j)$, we set $D[\sigma] = 1$ if and only if $\sigma$ is consistent and there exists a feasible candidate $c_p \in C(\sigma)$ for the bottommost site $s_{\bot} \in S(\sigma)$ such that $D[(s_l, c_i, s_{\bot}, c_p)] = D[(s_{\bot}, c_p, s_r, c_j)] = 1$.
Of course, if $S(\sigma) = \emptyset$, then being consistent is sufficient for $D[\sigma] = 1$.
In all other cases, we set $D[\sigma] = 0$.
It remains to show correctness and running time.
}

\begin{statelater}{planarCorrectness}
\begin{lemma}
	\label{lem:planar-correct}
	For every state $\sigma = (s_l, c_i, s_r, c_j)$ it holds $D[\sigma] = 1$ if and only if $\sigma$ is valid.
\end{lemma}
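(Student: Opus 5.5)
We prove both directions by induction on $\Size{S(\sigma)}$, processing states in order of increasing number of contained sites, exactly as the DP does. Throughout we use that the two defining leaders $\leader_l=(s_l,c_i)$ and $\leader_r=(s_r,c_j)$ together with the top boundary bound the trapezoid $\tau(\sigma)$.

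\emph{Soundness ($D[\sigma]=1 \Rightarrow$ valid).}

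If $S(\sigma)=\emptyset$, then consistency gives what we need. \Cref{planar:same-cardinality} gives $C(\sigma)=\emptyset$, \Cref{planar:crossing-free} makes the two leaders non-crossing, and \Cref{planar:embedding} provides an embedding $\pi$ with $\pi(\ell_l)=i$ and $\pi(\ell_r)=j$. So $\sigma$ is valid.

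Otherwise, let $s_\bot$ be the bottommost site in $S(\sigma)$ and $c_p$ the feasible candidate chosen by the DP. By induction, the two substates $\sigma_1=(s_l,c_i,s_\bot,c_p)$ and $\sigma_2=(s_\bot,c_p,s_r,c_j)$ are valid. This gives embeddings $\pi_1,\pi_2$ inducing planar labelings of $\tau(\sigma_1)$ and $\tau(\sigma_2)$.

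Planarity of the glued labeling follows from \Cref{obs:planar-bottommost}. Every site of $S(\sigma)$ lies above $s_\bot$, and $\leader_\bot$ hits no other site by feasibility, so $\leader_\bot$ separates the two sub-trapezoids and no leader of one side can cross a leader of the other.

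The main obstacle is admissibility: producing a \emph{single} embedding $\pi\in\Pi(T)$ that realises both sub-labelings at once. My plan is as follows.
\begin{itemize}
\item Let $v=\lca{\ell_l}{\ell_r}$, $v_1=\lca{\ell_l}{\ell_\bot}$ and $v_2=\lca{\ell_\bot}{\ell_r}$. By \Cref{planar:not-trapped} and \Cref{obs:planar-valid-state}, all leaves involved lie in $T(v)$.
\item At every internal node of $T(v)$, the order of the children is forced by the candidate sets. Candidates in $C(\sigma_1)$ lie left of $c_p$, and those in $C(\sigma_2)$ lie right of $c_p$. \Cref{planar:embedding} for $\sigma$ fixes the relative placement of every leaf whose site is above $s_l$ and $s_r$, which includes all of $S(\sigma)$.
\item I would therefore define $\pi$ on $T(v)$ by taking $\pi_1$ on the subtrees hanging off the path from $\ell_l$ to $v_1$ and $\pi_2$ on those off the path from $\ell_r$ to $v_2$.
\item These choices agree on the shared path to $\ell_\bot$: both $\pi_1$ and $\pi_2$ must place $\ell_\bot$ at position $p$, and the position of $\ell_\bot$ determines the child order at every node on its root path.
\item Outside $T(v)$, use the embedding guaranteed by \Cref{planar:embedding} for $\sigma$.
\end{itemize}
One must also check that no leaf of $T(v)$ outside $S(\sigma)\cup\{s_l,s_r\}$ is forced into $C(\sigma)$. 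This is a counting argument: \Cref{planar:same-cardinality} says the contained sites fill $C(\sigma)$ exactly, so such leaves are pushed outside $[i,j]$.

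\emph{Completeness (valid $\Rightarrow D[\sigma]=1$).}

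If $\sigma$ is valid, it is consistent, since \Cref{planar:same-cardinality,planar:crossing-free,planar:not-trapped,planar:embedding} were noted to be necessary. Take the witnessing embedding $\pi$ with its planar labeling $\labeling(\sigma)$, and let $c_p=\labeling(s_\bot)$.

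Then $c_p$ is feasible, since a leader through another site would violate planarity, which includes crossings at sites. By \Cref{obs:planar-bottommost}, the restrictions of $\labeling(\sigma)$ to the left and right of $\leader_\bot$ are planar and induced by the same $\pi$. Hence both substates are valid, and by induction their table entries are $1$. So the DP sets $D[\sigma]=1$.
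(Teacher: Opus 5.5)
\textbf{Overall.} Your completeness half is the paper's argument. For soundness you take a different route. The paper glues $\mathcal{L}_1,\mathcal{L}_2$ into one leaf order and rules out a separating triple $\ell_x,\ell_y,\ell_z$ by a case analysis on the consistency of $\sigma_1,\sigma_2,\sigma$; it then extends to $T$ via Con~4. You instead splice $\pi_1$ and $\pi_2$ node by node. That route can work, but there is a genuine gap in the step that carries it.

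\textbf{The false step.} Your claim that the position of $\ell_\bot$ determines the child order on its root path is false, and $\pi_1,\pi_2$ really can disagree there. Take $T=((\ell_l,(a,(\ell_\bot,\ell_r))),d)$ with $S(\sigma_1)=\{s_a\}$, $S(\sigma_2)=\emptyset$, $i=1$, $p=3$, $j=4$. The orders $\ell_l\,a\,\ell_\bot\,\ell_r\,d$ and $d\,\ell_l\,\ell_\bot\,\ell_r\,a$ witness validity of $\sigma_1$ and $\sigma_2$ respectively. Both put $\ell_\bot$ at $3$, yet they orient the parent of $v_2$ and the root oppositely. Your assignment rule also does not cover such nodes. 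With $v_1=v$, the nodes strictly between $v_2$ and $v$ lie on neither the $\ell_l$--$v_1$ path nor the $\ell_r$--$v_2$ path.

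\textbf{What is missing.} Say $v_1=v$, the other case being symmetric. Keep $\pi_1$ everywhere outside $T(v_2)$; if you use the Con~4 embedding outside $T(v)$ instead, you owe an argument that it starts $T(v)$ where $\pi_1$ does. Inside $T(v_2)$, use $\pi_2$, except that hanging subtrees whose leaves have sites in $S(\sigma_1)$ keep their $\pi_1$-layout. This splice is position-consistent only if $T(v_2)$ starts at the same place in $\pi_1$ and $\pi_2$. That means both must orient $v_2$ and every node on the path from $\ell_\bot$ to $v_2$ identically.
\begin{itemize}
\item At $v_2$ this holds because $\ell_r$ cannot land in $(i,p)$.
\item Along the path, a hanging subtree that $\pi_1$ puts left of $\ell_\bot$ lies in $(i,p)$, so all its sites are in $S(\sigma_1)$.
\item One that $\pi_2$ puts right lies in $(p,j)$, so all its sites are in $S(\sigma_2)$.
\item Hence the two embeddings disagree exactly on hanging subtrees with no site in $S(\sigma_1)\cup S(\sigma_2)$.
\end{itemize}
That last case is excluded by Con~1 and Con~4 for $\sigma$. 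In the Con~4 embedding, $\ell_\bot$ lies in $(i,j)$, which forces such a subtree into $(i,j)$. But $(i,j)$ is already filled by the $|S(\sigma)|$ leaves of $S(\sigma)$. This is where your counting argument has to enter; as you state it, it is detached from the construction and does not justify the agreement.

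\textbf{A caveat shared with the paper.} Your completeness half relies, as the paper does, on ``valid implies consistent.'' For Con~4 this holds only when the witness is planar on all of $S$, which is all Theorem~\ref{thm:planar} needs. It does not hold under the literal definition of validity. Suppose $v$ has children $\ell_l$ and a cherry on $\ell_r,\ell_q$, $j=i+1$, and $s_q$ lies above $s_l,s_r$ and left of $\lambda_l$. Then the order $\ell_l,\ell_r,\ell_q$ witnesses validity, while Con~4 demands $\pi(\ell_q)<i$.
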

\begin{proof}
	Let $\instance = \instancelong$ be a geophylogeny and $\sigma = (s_l, c_i, s_r, c_j)$ a state.
	For the backward direction $(\Leftarrow)$, assume that there exists an embedding $\pi \in \Pi(T)$ that induces a planar labeling $\labeling(\sigma)$ witnessing the validity of $\sigma$.
	To see that $D[\sigma] = 1$ must hold, it is sufficient to observe that the existence of $\labeling(\sigma)$ also witnesses (i) consistency (every planar labeling witnesses \Cref{planar:same-cardinality,planar:crossing-free,planar:not-trapped,planar:embedding}) and (ii) the existence of a feasible candidate for the bottommost site $s_{\bot}$ for $\sigma$ and all corresponding sub-states.
	Therefore, we focus on the more interesting forward direction $(\Rightarrow)$.
	
	To show this direction, we use induction on the size of the state, i.e., $\Size{S(\sigma)}$.
	For the base case ($\Size{S(\sigma)} = 0$), note that $D[\sigma] = 1$ simply means that $\sigma$ is consistent.
	\Cref{planar:same-cardinality,planar:crossing-free,planar:embedding} together ensure that $\sigma$ is valid.
	In particular, the embedding whose existence is guaranteed by \Cref{planar:embedding} induces a planar labeling due to \Cref{planar:crossing-free}.
	
	For the inductive step, consider the more general case with $\Size{S(\sigma)} > 0$.
	Since $D[\sigma] = 1$, we have $\Size{S(\sigma)} = \Size{C(\sigma)}$ and there exists for the lowestmost site $s_{\bot} \in S(\sigma)$ a candidate $c_p \in C(\sigma)$ such that we have $D[\sigma_1] = D[\sigma_2] = 1$ for the two resulting sub-states $\sigma_1 = (s_l, c_i, s_{\bot}, c_p)$ and $\sigma_2 = (s_{\bot}, c_p, s_r, c_j)$.
	Note that we have $\Size{S(\sigma_1)}, \Size{S(\sigma_2)} < \Size{S(\sigma)}$.
	Therefore, by induction hypothesis, there exists two embeddings $\pi_1$ and $\pi_2$ from $\Pi(T)$ that witness validity of $\sigma_1$ and~$\sigma_2$, respectively.
	Since $\pi_1$ and $\pi_2$ witness validity, they induce a planar labeling $\labeling_1$ and $\labeling_2$ of the state.
	Let $v = \lca{\ell_l}{\ell_r}$, $v_1 = \lca{\ell_l}{\ell_{\bot}}$, and $v_2 = \lca{\ell_{\bot}}{\ell_r}$.
	Since $\sigma$ is consistent, we have $\ell_l, \ell_r, \ell_{\bot} \in L(T(v))$.
	Therefore, either $v_1 = v$ or $v_2 = v$.
	We assume without loss of generality the former case; all of the following arguments are symmetric for the latter case.
	
	We now show that we can combine $\labeling_1$ and $\labeling_2$ into a valid labeling of $\sigma$ and use it show that there exists an embedding $\pi\in\Pi(T)$ that witnesses validity of $\sigma$
	Let $\labeling\colon S(\sigma) \cup \{s_l, s_r\} \to C(\sigma) \cup \{c_i, c_j\}$ be a labeling defined as follows.
	Set $\labeling(s_l) = c_i$, $\labeling(s_{\bot}) = c_p$, and for every site $s_q \in S(\sigma)$ set $\labeling(s_q) = \labeling_1(s_q)$ if $s_q \in S(\sigma_1)$, $\labeling(s_q) = \labeling_2(s_q)$ if $s_q \in S(\sigma_2)$, and $\labeling(s_q) = c_j$ if $s_q = s_{\bot}$.
	We can observe that $\labeling$ must be planar since $\pi_1$ and $\pi_2$ induce planar labelings which are ``separated'' by the leader $(s_{\bot}, c_p)$.
	To see that it also respects the constraints induced by $T$, we assume it would not, i.e., there exists three leaves $\ell_x, \ell_y, \ell_z$ such that $\pi(\ell_x) < \pi(\ell_y) < \pi(\ell_z)$ and $\ell_z \notin u = L(T(\lca{\ell_x}{\ell_z}))$.
	We must have that $v$ is ancestor of $u$, as otherwise $s_y$ would be trapped in $\sigma$.
	Furthermore, observe that $s_x$, $s_y$, and $s_z$ cannot be lower than $s_{\bot}$.
	Clearly, $s_x$, $s_y$, and $s_z$ cannot be in the same sub-state since this would contradict validity of $\sigma_1$ or $\sigma_2$.
	Thus, from $\pi(\ell_x) < \pi(\ell_y) < \pi(\ell_z)$, we get that $s_x \in S(\sigma_1) \cup \{s_l\}$ and $s_z \in S(\sigma_2) \cup \{s_r\}$ must hold.
	Note that neither of $s_x$ and $s_z$ can be~$s_{\bot}$, as then all three sites would be part of the same sub-state, which we have already ruled out.
	We now show that any choice of site for $s_x$ leads to a contradiction; consider \Cref{fig:planarity-correctness} for the following arguments.
    \begin{figure}
    	\centering
    	\includegraphics[page=1]{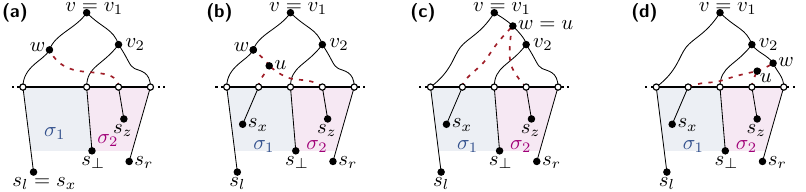}
    	\caption{Visualization of the arguments why combining a labelings for $\sigma_1$ and $\sigma_2$ results in an admissible labeling. \textbf{\textsf(a)} shows the case with $s_x = s_l$ and \textbf{\textsf{(b)}}--\textbf{\textsf{(c)}} show the different cases for $s_x \in S(\sigma_1)$.}
    	\label{fig:planarity-correctness}
    \end{figure}
	
	First, $s_x = s_l$ is not possible:
	If $s_z = s_r$, then $s_y$ is trapped. 
    If $s_z \in S(\sigma_2)$, we %
    observe that the first ancestor $w$ of $u$ is on a path from $\ell_l$ %
    to $v$ (if $u$ itself is on such a path, then $w = u$).
    Thus, $\sigma_1$ contains a subtree with a site right of the state (namely the site $s_z$, which is then above $s_l$ and $s_{\bot}$); see also \Cref{fig:planarity-correctness}a.

	Second, $s_x \in S(\sigma_1)$ is not possible:
	We consider the location of the first ancestor $w$ of $u$ on a path from $\ell_l$, $\ell_{\bot}$, or $\ell_r$ to $v$ (or $w = u$ if $u$ is already on such a path).
	If $w \neq u$ and is on the path from $\ell_l$ or $\ell_{\bot}$ to $v$, then $\sigma_1$ is inconsistent because it contains a subtree with a site right of the state (namely the site $s_z$); see \Cref{fig:planarity-correctness}b.
	If $w = u$, then we can conclude that $\labeling_2$ already did not respect the constraints with similar arguments as above (either because~$s_z$ is trapped or because $s_x$ is left of the state as in \Cref{fig:planarity-correctness}c).
	Otherwise, $w \neq u$ is on the path from $\ell_r$ to $v_2$, which implies that $\sigma_2$ is inconsistent because it contains a subtree with a site left of the state (namely the site $s_x$); see \Cref{fig:planarity-correctness}d.
	All cases lead to a contradiction with the assumed validity of $\sigma_1$ and $\sigma_2$ or consistency of $\sigma$.
	Thus, the labeling $\labeling$ respects the constraints imposed by $T$.
	In particular, it witnesses that there exists an embedding~$\pi'$ of $T(v)$ where $\pi'(\ell_l) = i$, $\pi'(\ell_r) = j$, and $i < \pi'(\ell_q) < j$ for every $s_q \in S(\sigma)$.
	\Cref{planar:embedding} now guarantees that we can ``extend'' $\pi'$ to an embedding $\pi \in \Pi(T)$ of $T$,
	In particular, $\pi'$ just specifies the embedding of the leaves for sites in $S(\sigma)$ (and $s_l$ and $s_r$) but maintain the interface to the remaining leaves via $\pi'(\ell_l) = i$ and $\pi'(\ell_r) = j$.
	Combining all, we conclude that $\sigma$ is valid.
\end{proof}
\end{statelater}

\onlyLong{
Before we use \Cref{lem:planar-correct} to establish \Cref{thm:planar}, we first describe how to implement our DP efficiently.
}

\begin{statelater}{planarEfficient}
\subparagraph*{An Efficient Implementation.}
While the previous paragraph gives us all ingredients for the DP, implemented na{\"i}vely, this will not be faster than the existing algorithm by Klawitter et al.~\cite{KKS+.VGI.2025}.
To this end, we make use several data structures and lookup tables that enable us to evaluate a state $\sigma$ in $\BigO{n}$ time.
First, we use an efficient data structure to query $\lca{\ell_i}{\ell_j}$ in constant time (see e.g.\ Gabow and Tarjan~\cite{GT.LTA.1985}).
Next, we compute for every site $s_i$ and candidate $c_j$ in $\BigO{n}$ time whether the leader $(s_i,c_j)$ intersect another site $s_p \neq s_i$.
Finally, we need to efficiently determine whether there exists an embedding $\pi \in \Pi(T)$ that places a subtree $T(v)$ for $v\in V$ starting at position $i$.
To this end, we precompute a map $\text{con}(v, i)$, which stores whether there exists an embedding $\pi$ such that $i \leq \pi(\ell_j) \leq i + n(v) - 1$ for every $\ell_j \in L(T(v))$ as follows.
We maintain a queue $Q$, which we initialize with $(v, 0)$.
While~$Q$ is non-empty, we (i) remove the first entry from $(v, i) \in Q$, (2) check if $\text{con}(v, i)$ is already set, and if not, (3) set $\text{con}(v, i) = 1$ and add $(x, i)$, $(x, i + n(y))$, $(y, i)$, and $(y, i + n(x))$ to $Q$, for $x, y \in \children{v}$ (if they exist), which corresponds to the two possible orientations around $v$.
Since every tuple $(v,i)$ is in $Q$ at most once, this takes $\BigO{n^2}$ time overall.
In the end, we set $\text{con}(v, i) = 0$ for every $v \in V$, $i \in [0]$ where $\text{con}(v, i)$ has not been set to $1$, which takes $\BigO{n^2}$ time.
\end{statelater}

\onlyLong{
Using these data structures, our dynamic program can be implemented to run in $\BigO{n^5}$ time as follows.
}

\begin{restatable}\restateref{thm:planar}{theorem}{theoremPlanar}
	\label{thm:planar}
	Let \instance be a geophylogeny on $n$ taxa.
    We can compute a planar labeling of \instance in $\BigO{n^5}$ time and $\BigO{n^4}$ space or decide that no such labeling exists.
\end{restatable}
\begin{prooflater}{ptheoremPlanar}
	Let $G$ be a geophylogeny on $n$ taxa.
	We add the sentinel leaves $\ell_L$ and $\ell_R$ and their corresponding dummy sites $s_L, s_R$ and candidates $c_L$ and $c_R$, respectively.
	Afterwards, we fill the DP-table $D$.
	By repeated application of \Cref{lem:planar-correct}, we conclude that $G$ admits a planar labeling if and only if $D[(s_L, c_L, s_R, c_R)] = 1$.
	It remains to argue running time.
	To this end, observe that each lookup-table takes $\BigO{n^2}$ space and can be computed in $\BigO{n^3}$ time.
	Moreover, there are $\BigO{n^4}$ possible states $\sigma$.
	Let $\sigma = (s_l, c_i, s_r, c_j)$ be one of these states.
	We now argue that we can evaluate it in linear time provided that all states $\sigma'$ with $\Size{S(\sigma')} < \Size{S(\sigma)}$ are already evaluated.
	
	When evaluating $\sigma$, we first have to compute the sets $S(\sigma)$, which we can do by iterating through all sites $S$ and checking whether they are above $s_l$ and $s_r$ and between the leaders $(s_l, c_i)$ and $(s_r, c_j)$.
	To determine the set $C(\sigma)$, we iterate through all candidates $c_p$ and check if $i < p < j$.
	Both steps take linear time and we can also determine the bottommost site $s_{\bot} \in S(\sigma)$ while computing $S(\sigma)$.
	
	In order to check if $\sigma$ is consistent, we let $v = \lca{\ell_i}{\ell_j}$ be the lowest common ancestor of the two leaves whose sites define $\sigma$.
	Obtaining $v$ takes constant time since we have computed the lowest common ancestor of every leaf pair.
	For each internal node $u \in I(T(v))$, we let $S(\text{left},u)$ denote the number of leaves $\ell_p \in L(T(u))$ such that $s_p$ is above $s_i$ and $s_j$ and left of the leader $(s_l, c_i)$.
	The numbers $S(\text{in}, u)$ and $S(\text{right}, u)$ are defined analogously but with respect to being between $(s_l, c_i)$ and $(s_r, c_j)$ and to the right of $(s_r, c_j)$, respectively.
	We can obtain all of these numbers via a single postorder traversal of $T(v)$.
	
	To check if $\sigma$ is consistent, we have to evaluate each of the four consistency-criteria \Cref{planar:same-cardinality,planar:crossing-free,planar:not-trapped,planar:embedding}.
	\Cref{planar:same-cardinality,planar:crossing-free} can easily be checked in constant time.
	For \Cref{planar:not-trapped}, i.e., whether there is no trapped site, we compare $\Size{S(\sigma)}$ with $\Size{S(\text{in}, v)}$ and reject the state if they are different as this implies that there is a site in $S(\sigma)$ which does not belong to the subtree $T(v)$.
	It remains to check \Cref{planar:embedding}, i.e., the existence of a suitable embedding $\pi \in \Pi(T)$ and for this, we initialize three ``global'' counters $C_{\text{left}} = i - 1$, $C_{\text{in}} = j - i - 1$, and $C_{\text{right}} = n - j$ with the number of candidates to the left of $c_i$, between $c_i$ and $c_j$, and to the right of $c_j$, respectively.
	We then traverse the tree $T$ from $\ell_l$ to $v$ and do the following at each internal node $u \neq v$ that we encounter.
	Let $w \in \children{u}$ be the child that does not contain $\ell_l \in L(T(w))$.
	We now make the following checks in the following order; see also \Cref{fig:planarity-inside-outside} for a visualization.
    \begin{figure}
    	\centering
    	\includegraphics[page=1]{planarity-inside-outside}
    	\caption{Let $u$ be a node on the path from $\ell_l$ to the lowest common ancestor and $w$ the child which is not on the path.
        We detect that the state is inconsistent (indicated with the red subtree) because \textbf{\textsf{(a)}} $S(\text{in}, w) > 0$ and $S(\text{left}, w) + S(\text{right}, w)$ > 0, \textbf{\textsf{(a)}} $S(\text{right}, w) > 0$, and \textbf{\textsf{(c)}}, only some the sites $L(T(w))$ are in $S(\sigma)$ (but none of them is in $S(\text{left}, w) + S(\text{right}, w)$.}
    	\label{fig:planarity-inside-outside}
    \end{figure}
    
	\begin{enumerate}
		\item If $S(\text{in}, w) > 0$ and $S(\text{left}, w) + S(\text{right}, w) > 0$, then $w$ contains leaves for sites that are in the state and outside the state.
		As this cannot lead to a desired embedding, we reject the state; see also \Cref{fig:planarity-inside-outside}a.
		\item If $S(\text{right}, w) > 0$, then $w$ contains a leaf whose site sits to the right of the state; see also \Cref{fig:planarity-inside-outside}b.
		However, since we are on a traversal from  $\ell_l$ to $v$, there cannot exist a valid embedding that places that leaf at a position $q > j$.
		Thus, we reject the state.
		\item If $S(\text{in}, w) > 0$, then we know that $w$ contains a leaf whose site is in the state.
		This implies that all leaves $L(T(w))$ be placed somewhere between position $i$ and $j$.
		Thus, we check if $S(\text{in}, w) = n(w)$, reject if this is not the case as in \Cref{fig:planarity-inside-outside}c, and decrease $C_{\text{in}}$ by~$n(w)$.
		\item If we did not reject the state until now, we know that $S(\text{in}, w) = 0$ holds.
		Thus, $L(T(w))$ must be placed somewhere before position $i$ and we decrease $C_{\text{left}}$ by $n(w)$.
	\end{enumerate}
	Once we are at $v$, we perform symmetrically for $\ell_r$.
	Note that all of the above embedding choices are completely determined by the fact that we must have $\pi(\ell_l) = i$ and $\pi(\ell_r) = j$.
	We reject the state if $C_{\text{left}} < 0$, $C_{\text{in}} \neq 0$, or $C_{\text{right}} < 0$ holds, since this implies an insufficiency (or mismatch) of the candidates and thus the non-existence of the sought-after embedding~$\pi$.
	Up until now, we have checked if $T(v)$ itself admits the desired embedding.
	However, it remains to check if this embedding is compatible with the remainder of the tree $T$.
	To this end, observe that the value of $C_{\text{left}}$ tells us the number of leaves that must come before the first leaf of $T(v)$ in the embedding $\pi$.
	Or, viewed differently, in $\pi$, the first leaf of $T(v)$ must be placed at position $C_{\text{left}} + 1$.
	Thus, we check if $\text{con}(v, C_{\text{left}} + 1) = 1$ holds and reject the state otherwise.
	If none of the above checks failed, then we know that $\sigma$ is consistent.
	Overall, this constant time per internal node on the two paths to $v$.
	Since there are $\BigO{n}$ internal nodes on each of the two paths, consistency can be checked in linear time.

	Finally, we have to check for each of the $\BigO{n}$ candidates $c_p \in C(\sigma)$ if they are feasible and, if so, evaluate the corresponding sub-states.
	Since we have precomputed if $(s_{\bot}, c_p)$ is feasible, this takes only constant time per candidate.
	Overall, we conclude that we can evaluate a state $\sigma$ in $\BigO{n}$ time.
	Since there are $\BigO{n^4}$ states, we can fill the entire DP table in 
	$\BigO{n^5}$ time.
	Obtaining a witnessing planar labeling (if there exists one), can be obtained in the same time via standard backtracking techniques and the statement thus follows.
\end{prooflater}
When evaluating a state $\sigma$, we can also optimize a so-called \emph{leaf-additive} objective function\footnote{A leaf-additive function can be expressed as $\sum_{s \in S} f(\labeling(s))$, i.e., only depends on the label position for each site but ignores the rest of the labeling~\cite{KKS+.VGI.2025}.},
which includes common objectives such as the length of the leaders~\cite{BHKN.AMC.2009}.%
\begin{corollary}
	\label{cor:planar-leaf-additivie}
	Let $G$ be a geophylogeny $G$ on $n$ taxa and let $f$ be a leaf-additive optimization function.
	In $\BigO{n^5}$ time and $\BigO{n^4}$ space we can compute a planar labeling of $G$ that minimizes/maximizes $f$ or conclude that no planar labeling exists.
\end{corollary}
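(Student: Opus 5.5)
We would prove \Cref{cor:planar-leaf-additivie} by turning the decision DP behind \Cref{thm:planar} into an optimization DP over the same $\BigO{n^4}$ states $\sigma = (s_l, c_i, s_r, c_j)$. Instead of a bit $D[\sigma]$, the table entry is a value $D_f[\sigma]$. It equals the optimum of $\sum_{s \in S(\sigma)} f(\labeling(s))$ over all valid sub-labelings $\labeling(\sigma)$ of $\sigma$, and it is $+\infty$ (resp.\ $-\infty$ when maximizing) if $\sigma$ is not valid. The contributions of $s_l$ and $s_r$ are excluded from $D_f[\sigma]$ so that a boundary leader is never counted twice. The base case is the same as before: if $S(\sigma)=\emptyset$ and $\sigma$ is consistent, set $D_f[\sigma]=0$. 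Otherwise, for consistent $\sigma$ with bottommost site $s_\bot$, set
\[
D_f[\sigma] = \min_{c_p} \bigl( D_f[(s_l,c_i,s_\bot,c_p)] + f(c_p) + D_f[(s_\bot,c_p,s_r,c_j)] \bigr),
\]
where the minimum ranges over the feasible candidates $c_p \in C(\sigma)$ and $f(c_p)$ is the leaf-additive cost of placing $s_\bot$ at $c_p$. The answer is $D_f[(s_L,c_L,s_R,c_R)]$, with the dummy leaders contributing $0$.

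For correctness we reuse \Cref{lem:planar-correct}, since the finite entries of $D_f$ are exactly the valid states. It remains to show that the recursion computes the optimum, which follows from an exchange argument via \Cref{obs:planar-bottommost}. Fix an optimal valid labeling of $\sigma$. The leader of $s_\bot$ goes to some feasible $c_p$ and splits this labeling into valid labelings of the two sub-states, so the optimum is at least the recursive value. Conversely, take optimal witnesses $\labeling_1$ and $\labeling_2$ for the two sub-states. The forward direction of the proof of \Cref{lem:planar-correct} combines them into a valid labeling of $\sigma$, and because $f$ is leaf-additive the cost of the combination is exactly the sum of the parts. This second inequality is the step that needs care. \Cref{lem:planar-correct} was stated for arbitrary witnesses, so we must check that its combination argument uses only validity of $\sigma_1,\sigma_2$ and consistency of $\sigma$, and not any particular choice of witness. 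Reading the proof confirms this, so optimal witnesses may be substituted.

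The running time is unchanged. The values $f(c_p)$ for every site/candidate pair are precomputed in $\BigO{n^2}$ time, assuming $f$ can be evaluated in constant time per pair. Each state still takes $\BigO{n}$ time, because the minimum replaces an existential check over the same $\BigO{n}$ candidates. Hence the total is $\BigO{n^5}$ time and $\BigO{n^4}$ space. An optimal labeling is recovered by standard backtracking, as in \Cref{thm:planar}; if $D_f$ of the root state is infinite, no planar labeling exists.
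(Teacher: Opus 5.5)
Your proposal is correct and takes the same route as the paper, which gives no separate proof and only remarks that a leaf-additive objective can be optimized while evaluating each state of the $\BigO{n^5}$ planarity DP of \Cref{thm:planar}. Your min/max recurrence over feasible candidates for the bottommost site supplies the detail that remark leaves implicit, as does your check that the combination step in \Cref{lem:planar-correct} works for arbitrary (hence optimal) sub-witnesses.
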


\subsection{Tree-Consistent Coloring in Polynomial Time}
\label{sec:tree-consistent-dp}
We design a dynamic program that builds upon the planarity test from \Cref{sec:planar}.
In particular, observe that for an interior node $v$, if $T(v)$ admits a planar labeling, then it is possible to assign every leaf $\ell_i \in L(T(v))$ the same color; %
otherwise, the subtrees $T(u)$, $u \in \children{v}$, must receive distinct colors.

\begin{restatable}\restateref{lem:tree-consistent-planar}{lemma}{lemmaConsistentPlanar}
	\label{lem:tree-consistent-planar}
    Let $(\pi, \varphi)$ be a minimum tree-consistent colored embedding of \instance.
	For every node $v \in V$, let $\labeling_{\pi}[v]$ be the labeling $\labeling_{\pi}$ induced on the sites for leaves $\ell_i \in L(T(v))$.
	If $\labeling_{\pi}[v]$ is planar, we have $\phi(\ell_i) = \phi(\ell_j)$ for every $\ell_i, \ell_j \in L(T(v))$.
	Otherwise, $\phi(\ell_i) \neq \phi(\ell_j)$ for every $\ell_i \in L(T(u_1))$ and $\ell_j \in L(T(u_2))$, where $u_1, u_2 \in \children{v}$.
\end{restatable}
\begin{prooflater}{plemmaConsistentPlanar}
	Let $\labeling$ be a labeling of \instance and $\varphi$ an optimal tree-consistent coloring.
	Moreover, let $v \in V$ be a node of $T$.
	We consider the following two cases.
	
	\proofsubparagraph{$\boldsymbol{\labeling[v]}$ is planar.}
	Assume that there exists two leaves $\ell_i, \ell_j \in L(T(v))$ such that $\phi(\ell_i) \neq \phi(\ell_j)$.
	We now show that $\varphi$ is not optimal by creating a tree-consistent coloring $\varphi'$ with fewer colors.
	To this end, first we set $\varphi'(\ell_i) = \varphi(\ell_i)$ for every leaf $\ell_i \in L(T) \setminus L(T(v))$.
	Afterwards, we consider the embedding $\pi = \pi(\labeling[v])$.
	We set $\varphi'(\ell_i) = \varphi(\ell_i)$, where $\ell_i = \pi^{-1}(1)$ is the leftmost leaf in the embedding $\pi$ induced by $\labeling[v]$.
	Next, we consider every $i \in [n(v)]$, $i > 1$, and set $\varphi'(\pi^{-1}(i)) = \varphi'(\pi^{-1}(i - 1))$.
	
	Observe that in the end, every leaf of $T(v)$ has in $\varphi'$ the color $\varphi(\pi^{-1}(1))$.
	Since we did not alter the colors of other leaves, $\varphi'$ is tree-consistent.
	Moreover, since $\labeling[v]$ is planar, it has no monochromatic crossings.
	Finally, since now all leaves of $T(v)$ have the same color, which they did not have before, $\varphi'$ uses at least one color less than $\varphi$.
	This contradicts the assumption that $\varphi$ is optimal.
	
	\proofsubparagraph{$\boldsymbol{\labeling[v]}$ is contains a crossing.}
	Note that $\labeling[v]$ is always planar if $v$ is leaf of $T$.
	Thus, we know $v \in I$ and we let $u_1, u_2 \in \children{v}$ be the two children of $v$.
	Assume that there are twp leaves $\ell_i \in L(T(u_1))$ and $\ell_j \in L(T(u_2))$ such that $\phi(\ell_i) = \phi(\ell_j) = q$.
	Observe that $v = \lca{\ell_i}{\ell_j}$ must hold.
	Since $\varphi$ is tree-consistent, this implies that every leaf $\ell_p \in L(T(v))$ has color $\varphi(\ell_p) = c$.
	However, since $\labeling[v]$ is not planar, this implies that there exists a monochromatic crossing in $\varphi$.
	This contradicts the assumption that $\varphi$ does not have monochromatic crossings.

	The statement follows by combining the above two cases.
\end{prooflater}
\Cref{lem:tree-consistent-planar} enables us to perform a bottom-to-top DP to find a minimum tree-consistent colored embedding.
We use a DP-table $D[(v,i)]$, which stores the minimum number of colors for a tree-consistent colored embedding of the subtree $T(v)$ on the candidates $c_i$ up to $c_{i + n(v) - 1}$.
\shortLong{
Observe that the only freedom of an interior node lies in the order of its children, i.e., which child subtree's embedding starts at $c_i$; internally the subtrees are independent.
Formally, we set $D[(v, i)] = 1$ if $T(v)$ admits a planar labeling on $c_i, \ldots, c_{i + n(v) - 1}$.
Otherwise, $D[(v, i)]$ equals the minimum total colors over both orderings of \children{v}.
The table has $\BigO{n^2}$ entries and by \Cref{thm:planar} each of them can be evaluated in $\BigO{n^5}$ time.
}{%
Let $v \in I$ be an interior node of $T$ with children $u_1, u_2 \in \children{v}$.
If~$T(v)$ should be embedded on the candidates $c_i$ to $c_{i + n(v) - 1}$, then the only freedom that we have is whether $u_1$ is embedded starting at candidate $c_i$ (and $u_2$ starting at $c_{i + n(u_1)}$) or $u_2$ is embedded starting at candidate $c_i$ (and $u_1$ starting at $c_{i + n(u_2)}$).
Once this decision is made, both subtrees become independent since we seek a tree-consistent coloring.
Together with \Cref{lem:tree-consistent-planar}, this gives rise to the following recurrence.
For $D[(v, i)]$, if $T(v)$ admits a planar labeling on the candidates $c_i, \ldots, c_{i + n(v)- 1}$, we set $D[(v, i)] = 1$.
Otherwise, we set $D[(v, i)] = \min(D[(u_1, i)] + D[(u_2, i + n(u_1))], D[(u_2, i)] + D[(u_1, i + n(u_2))])$, reflecting the two embedding choices of $v$'s children.
Below, we show the running time of the DP.
}

\begin{restatable}\restateref{thm:tree-consistent}{theorem}{theoremTreeConsistent}
	\label{thm:tree-consistent}
	A minimum tree-consistent colored embedding of a geophylogenie on $n$ taxa can be determined in $\BigO{n^7}$.
\end{restatable}
\begin{prooflater}{ptheoremTreeConsistent}
    Based on \Cref{lem:tree-consistent-planar}, we use the following recurrence.
    For $D[(v, i)]$, if $T(v)$ admits a planar labeling on the candidates $c_i, \ldots, c_{i + n(v)- 1}$, we set $D[(v, i)] = 1$.
    Otherwise, we set $D[(v, i)] = \min(D[(u_1, i)] + D[(u_2, i + n(u_1))], D[(u_2, i)] + D[(u_1, i + n(u_2))])$, reflecting the two embedding choices of $v$'s children.

	We run the above-described dynamic program DP to fill the table $D$ via a bottom-to-top tree traversal of $T$.
	In the end, the minimum number of colors $k$ for a tree-consistent coloring~$\varphi$ can be read-off from the table entry $D[(v, 1)]$, where $v$ is the root of $T$.
	Correctness of this approach follows by \Cref{lem:tree-consistent-planar}.
	Regarding the running time, the table $D$ has $\BigO{n^2}$ entries.
	The time for computing a single entry $D[(v,i)]$ is dominated by the time for determining the existence of a planar embedding of $T(v)$ starting at candidate $c_i$.
	To this end, we can use for this (a slightly modified version of) our $\BigO{n^5}$-time algorithm; recall \Cref{thm:planar}.
	Therefore, the overall running time of our algorithm is $\BigO{n^7}$.	
	Note that we can obtain the embedding of $T$ in the same running time using standard backtracking techniques.
\end{prooflater}
Contrary to what this runtime bound suggests, we will see in \Cref{sec:experiments} that this algorithm easily handles instances with $n=100$ when implemented using memoization.

\section{Drawing a Single Page}
\label{sec:extensions}
So far we have considered the ``paged'' drawing of a geophylogeny as a literal coloring of the leaders, where each color class is noncrossing.
Call this the \emph{overview drawing}.
We now consider how to draw individual pages, either for browsing on a small screen (cf. Gedicke, Arzoumanidis, and Haunert~\cite{GAH.Aep.2023}), or in an interactive system that presents the overview drawing in color and lets the user ``focus'' on a particular color class.
This particularly makes sense for tree-%
\NewText{consistent} colorings, where the color classes are biologically meaningful as opposed to scattered arbitrarily throughout the tree.

Drawing a page by simply omitting the other color classes makes poor use of space: it is possible to ``zoom in'' significantly by zooming the map region to the relevant subset of sites, and by spreading the leaves across the entirety of the top boundary (presumably they are clumped in the overview drawing).
We argue that the leaf order in a page drawing should match the overview drawing, to maintain the viewer's mental model; see \Cref{fig:page-filter,fig:page-redraw}.

\begin{figure}[t]
\centering\includegraphics[page=2]{paged-drawing-2}
\caption{Overview drawing with tree-consistent labeling, and its 3 page drawing instances.}
\label{fig:page-filter}
\end{figure}

\begin{figure}[t]
\centering\includegraphics[page=3]{paged-drawing-2}
\caption{Page drawing corresponding to the first color class of \Cref{fig:page-filter} using (from left to right): filtering, fixed candidates, and sliding labels.
Note that the middle two sliding labels are not closer together due to the  minimum-distance constraint.}
\label{fig:page-redraw}
\end{figure}

Drawing a page is therefore a distinct computational problem.
A planar labeling of a subset of $S$ is given: these sites are already associated with a reference point, implying a left-to-right total order $\prec$ on the leaves, and crossing-free leaders.
We now propose to either redistribute the leaves over the original $n$ reference points, or to use sliding labels across the top boundary.
In either case, the following observation is algorithmically useful. %

\begin{observation}
	\label{obs:page-fixed}
	Let $s_i, s_j \in S$ be two sites with $y(s_i) < y(s_j)$.
	Moreover, let $x^*$ be the $x$-coordinate of the line $\overline{s_is_j}$ through $s_i$ and $s_j$ at the top boundary of $R$.
	If $s_i \prec s_j$, then $\leader_i = (s_i, c_p)$ and $\leader_j = (s_j, c_q)$ cross if and only if $x(c_p) \geq x^*$.
	If $s_j \prec s_i$, then $\leader_i = (s_i, c_p)$ and $\leader_j = (s_j, c_q)$ cross if and only if $x(c_p) \leq x^*$.
\end{observation}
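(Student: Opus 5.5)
The plan is a one-dimensional sign-change argument along the vertical direction. Orient $R$ so that its top boundary lies at height $H$ and every site lies strictly below it, and put $s_i$ strictly below $s_j$. Every leader is a straight segment from a site up to the top boundary. So for any height $y$ it spans, it meets the horizontal line at height $y$ in exactly one point, whose $x$-coordinate is an affine function of $y$.

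Consider the case $s_i \prec s_j$, so that $x(c_p) < x(c_q)$. Since $\leader_j$ only occupies heights in $[y(s_j), H]$, any common point of $\leader_i$ and $\leader_j$, including a touch at $s_j$, must lie in that height range. On this range both leaders are defined, and I consider the difference $d(y) = x_i(y) - x_j(y)$ of their $x$-coordinates at height $y$. This difference is affine in $y$, and at the top it equals $d(H) = x(c_p) - x(c_q) < 0$. Hence the leaders cross (share a point) if and only if $d$ has a zero in $[y(s_j), H]$. Because $d$ is affine with $d(H) < 0$, this happens if and only if $d(y(s_j)) \ge 0$. In words: $\leader_i$ passes through $s_j$ or to its right at height $y(s_j)$.

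It remains to translate this into a condition on $x(c_p)$. Consider all segments from the fixed point $s_i$ to a point $(x, H)$ on the top boundary. Their $x$-coordinate at the intermediate height $y(s_j)$ is a strictly increasing affine function of $x$, since $y(s_j) > y(s_i)$. It equals $x(s_j)$ exactly when the segment passes through $s_j$, that is, when $x = x^*$ by definition of $x^*$. Therefore $x_i(y(s_j)) \ge x(s_j)$ if and only if $x(c_p) \ge x^*$, which proves the first claim.

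The case $s_j \prec s_i$ is the mirror image. Now $d(H) > 0$, so a crossing occurs if and only if $d(y(s_j)) \le 0$. By the same monotonicity this holds if and only if $x(c_p) \le x^*$.

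The only delicate point is the degenerate contacts. A touch exactly at $s_j$ corresponds to $x(c_p) = x^*$, and the paper counts it as a crossing ("including at sites"), which is why the inequalities are non-strict. A touch at the top boundary cannot occur because $c_p \neq c_q$. Equal heights do not arise since $y(s_i) < y(s_j)$ is assumed.
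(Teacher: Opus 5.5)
Your proof is correct; the paper states this as an observation without proof, and your argument is exactly the geometric reasoning behind it. You restrict to the common height range above $s_j$, use that the horizontal offset between the two leaders is affine in height with a fixed sign at the top boundary, and convert its sign at height $y(s_j)$ into the position of $c_p$ relative to $x^*$ via the line through $s_i$ and $s_j$. You also correctly make explicit the implicit conventions that make the statement true: $y$ increases toward the top boundary so $s_i$ is the lower site, candidate order follows $\prec$, and touching at $s_j$ counts as a crossing, which is why the inequalities are non-strict.
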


\subparagraph*{Fixed Candidates.}
We use dynamic programming that considers the candidates from left to right and stores in a DP-table $D[(i,j)]$ the objective value (e.g. leader length) of an optimal labeling on the first $i$ sites using the first $j$ candidates.
While traversing the candidates, we ensure only planar labelings are considered:
\onlyLong{Since we cannot store at which candidate each site is labeled -- this would simply be too much information to maintain in our DP table--we exploit the known label order $\prec$ of the sites once more.
}%
\Cref{obs:page-fixed} allows us to precompute fixed range of $x$-values in which the candidate for each $s_i$ must lie to ensure planarity.
\shortLong{}{This gives us a set $\text{candidates}(s_i) \subseteq C$ of \emph{feasible} candidates.
Equipped with $\text{candidates}(\cdot)$, we can now show:}

\begin{restatable}\restateref{thm:page-fixed}{theorem}{theoremPageFixed}
\label{thm:page-fixed}
    An optimal page labeling on $n$ candidate positions with respect to a leaf-additive objective function can be computed in $\BigO{n^2}$ time.
\end{restatable}
\begin{prooflater}{ptheoremPageFixed}
    Let $\instance = \instancelong$ be a geophylogenie, $v \in I$ be the root of the subtree that we want to label, $\pi \in \Pi(T(v))$ an embedding of $T(v)$ and $\prec$ the resulting left-to-right order of its leaves $L(T(v))$.
    Let the candidates $C = \{c_1, \ldots, c_n\}$ be ordered from left to right.
    Moreover, we denote with $S = \{s_1, \ldots, s_{n(v)}\}$ the sites for the leaves $L(T(v))$ and order them according to $\prec$.
    We compute in $\BigO{n^2}$-time the set of feasible candidates $\text{candidates}(s_i) \subseteq C$ for every $\ell_i \in L(T(v))$.
	Assume that we want to minimize some leaf-additive objective function $f\colon S \times C \to \mathbb{R}$.
	We now devise a DP, where the DP-table $D[(i, j)] = c$, $i \in [n(v)] \cup \{0\}$, $j \in [n] \cup \{0\}$, stores the minimum cost $c$ of a planar labeling of the first $i$ sites at the first~$j$ candidates, or $\infty$ if no planar labeling exists.
	First, note that $D[(i, 0)] = \infty$ for every $i \in [n(v)]$, since labeling sites at no candidates is impossible.
	Similarly, $D[(0, j)] = 0$ for every $j \in [n]$, since no labeling at all has cost $c$, no matter how many candidates we use.
	
	For general $i \in [n(v)]$ and $j \in [n]$, we can either label the site $s_i$ at the candidate $c_j$ (if that does not result in a crossing), or label it at some earlier candidate, i.e., one that is left of $c_j$.
	This results in the following recurrence.
	\begin{align*}
		D[(i, j)] =
		\begin{cases}
			\min (D[(i, j - 1)],\ D[(i-1, j - 1)] + f(s_i, c_j)) &\quad\text{if}\ c_j \in \text{candidates}(s_i)\\
			D[(i, j - 1)] &\quad\text{otherwise}
		\end{cases}
	\end{align*}
	Filling the table takes $\BigO{n^2}$ time and in the end $D[n(v), n]$ tells us the minimum cost of a planar labeling, or that there exists none (if $D[n(v), n] = \infty$).
	In the affirmative case we can obtain the labeling $\labeling$ in the same running time using standard backtracking techniques.
\end{prooflater}

\subparagraph*{Sliding Labels.}
For sliding labels, we allow leaves to be placed anywhere on the top boundary of the map, in the correct order $(\prec)$, at least a given distance $\Delta\geq 0$ apart, and propose to minimize the sum of squared discrepancy in $x$-coordinate between each site and the corresponding leaf.
Again by \Cref{obs:page-fixed}, we know \emph{a priori} a range of $x$ values that each particular leaf cannot go outside of.
It is easy to see that optimal label positions can be found in polynomial time by convex quadratic programming, but in fact they can be found in linear time using an isotonic regression algorithm of Kopperschmidt and Stacevi{\v{c}}ius~\cite{kopperschmidt2024pooling}

\begin{restatable}\restateref{thm:page-sliding}{theorem}{theoremPageSliding}
\label{thm:page-sliding}
    An optimal page labeling for $n$ sites on sliding candidates with least-squares $x$-discrepancy and minimum distance $\Delta$ can be computed in $\BigO{n}$ time.
\end{restatable}
\begin{prooflater}{ptheoremPageSliding}
    For every leaf $i$, let $x_i$ represent the $x$ coordinate of the leaf's position on the boundary; without loss of generality, index the leaves as $1$ to $n$ from left to right.
    To stay consistent with the given tree embedding, the sequence of $x_i$ must be monotone in $i$ (or: ``isotonic'' in the regression literature).
    To ensure a correct planar drawing, each $x_i$ is also subject to the following interval constraint.
    Firstly, $x_i$ must be on the boundary.
    Secondly, the leader must pass to the right of site $s_{i-1}$ if it exists (resp.\ to the left of $s_{i+1}$), since otherwise monotonicity implies a crossing between these leaders:
    this is a fixed interval for every $x_i$, see \Cref{obs:page-fixed}.
    
    The objective function on the $x_i$ is isotonic least squares adjustment with target values $x(s_i)$ under nonuniform interval constraints.
    This can be solved in linear time using an algorithm of Kopperschmidt and Stacevi{\v{c}}ius~\cite{kopperschmidt2024pooling}.
    To achieve a nonzero buffer distance $\Delta$ between each leaf position, let $x_i$ represent the leaf's position minus $i\cdot \Delta$ and shift its interval constraint correspondingly.
    Since the order of the leaves is given and fixed, monotonicity of the $x_i$ now encodes the buffered monotonicity of the leaf positions.
\end{prooflater}

\section{Integer Linear Programming Formulations}
\label{sec:ilp}
Given the conjectured hardness of finding a minimum colored embedding for arbitrary geophylogenies~\cite{DKMW.GLG.2023,Ung.kCC.1988}, we discuss here a number of integer linear programs (ILP)
based on Klawitter et al.'s~\cite{KKS+.VGI.2025} ILP for obtaining a crossing-minimal labeling of (\ILPCrossing).  %
\ILPCrossing %
has a binary variable $\chi_{i,j}$ for every pair of sites $s_i, s_j \in S$, $i < j$, which indicates whether the two leaders $\leader_i$ and $\leader_j$ are \emph{allowed to} cross. 
We %
connect our coloring logic to the $\chi$-variables by requiring different colors whenever $\chi_{i,j} = 1$.

\subparagraph*{Minimizing Colors (\ILPColors).}
We extend \ILPCrossing with %
a binary variable $\varphi_{i,c}$ for every leaf $\ell_i$ and color $c \in [n]$ to indicate if leaf $\ell_i$ receives color $c$, i.e., $\varphi(\ell_i) = c$.
Note that $n$ is a trivial upper bound for the number of required colors.
Moreover, we use a binary variable %
$\gamma_{c}$ for every color $c \in [n]$ to indicate some leaf has color $c$.

We minimize $\sum_{c \in [n]} \gamma_c$ and add the following constraints to \ILPCrossing. %

\begin{align}
    \sum_{c \in [n]} \varphi_{i,c} = 1 &\qquad\text{for all}\ \ell_i \in L(T) \label{con:ilp-colors-one-color}\\
    \varphi_{i,c} + \varphi_{j,c} + \chi_{i,j} \leq 2 &\qquad\text{for all}\ \ell_i, \ell_j \in L(T),\ i < j,\ c \in [n] \label{con:ilp-colors-crossing}\\
    \varphi_{i,c} \leq \gamma_c &\qquad\text{for all}\ \ell_i \in L(T),\ c \in [n] \label{con:ilp-color-used}
\end{align}
Constraint~\eqref{con:ilp-colors-one-color} ensures that every leaf receives exactly one color, Constraint~\ref{con:ilp-color-used} forces $\gamma_c = 1$ whenever $\varphi_{i,c} = 1$ for some $\ell_i \in L(T)$. 
Constraint~\eqref{con:ilp-colors-crossing} ensures that two crossing leaders $\leader_i$ and $\leader_j$ receive different colors since \ILPCrossing guarantees $\chi_{i,j} = 1$ in this case (for $\chi_{i,j} = 0$, this constraint is trivially satisfied).
Recall that $\chi_{i,j} = 1$ is possible for non-crossing leaders and $\gamma_c = 1$ can hold even if no leaf has color $c \in [n]$.
This is not problematic for correctness since we minimize  $\sum_{c \in [n]} \gamma_c$, and
\ILPCrossing guarantees for $\chi_{i,j} = 0$ non-crossing leaders. %

\subparagraph*{Other Formulations.}
We additionally implement a formulation \ILPColorsSym which adds known symmetry breaking constraints for graph coloring~\cite{MDZ.PAG.2001}.
The literature contains further formulations for coloring that are tailored to graphs of certain density and usually outperform the above textbook formulation~\cite{JM.NIL.2018}.
Since our leader intersection graph (and its density) varies with the embedding of $T$, it is a priori not clear which formulation will perform better in our setting.
Therefore, we implement two further alternative formulations:
the \emph{representative formulation}~\cite{CCC.arf.2008}, where each leaf $\ell_i$ can become the representative of its color class, and the \emph{partial order based formulation}~\cite{JM.NIL.2018}, in which we fix a linear order on the color classes.
Call these \ILPColorsRep and \ILPColorsPop respectively. %

\subparagraph*{Tree-Consistent Colorings.}
In our experimental evaluation, we will also evaluate the number of crossings in a tree-consistent coloring with only a limited number of \emph{available} colors.
Thus, we now discuss how to incorporate tree-consistency into formulation \ILPColors. %
We introduce a binary variable $\phi_{v,c} \in \{0,1\}$ for every internal node $v \in I$.
If $\phi_{v,c} = 1$, then every leaf $\ell_i \in L(T(v))$ of the subtree $T(v)$ should receive color~$c$.
Furthermore, we add the constraints:
\begin{align}
    \sum_{c \in [n]} \varphi_{v,c} \leq 1 &\qquad\text{for every}\ v \in I \label{con:ilp-colors-tree-one-color}\\
    \varphi_{v,c} \leq \varphi_{u,c} &\qquad\text{for all}\ v \in I,\ u \in \children{v},\ c \in [n] \label{con:ilp-colors-tree-children}\\
    \varphi_{i,c} + \varphi_{j,c} - 1 \leq \varphi_{v,c} &\qquad\text{for all}\ \ell_i, \ell_j \in L(T),\ v=\lca{\ell_i}{\ell_j},\ i < j,\ c \in [n] \label{con:ilp-colors-tree-lca}
\end{align}
Note that Constraint~\eqref{con:ilp-colors-tree-one-color} allows $\phi_{v,c} = 0$ for all colors $c\in [n]$, which models the case that leaves of $L(T(v))$ receive different colors.
Constraints~\eqref{con:ilp-colors-tree-children} and \eqref{con:ilp-colors-tree-lca} propagate colors down to children and up to the lowest common ancestor, respectively, to ensure tree-consistency.

\section{Experimental Evaluation}
\label{sec:experiments}
In this section, we experimentally evaluate the performance of our algorithms in terms of their running time and compare the effect of insisting on tree-consistency on the number of required colors.
We implemented the dynamic program (DP) for computing optimal tree-consistent colorings from \Cref{sec:tree-consistent} %
and all ILP formulations from \Cref{sec:ilp} in \texttt{Python 3.12}.\footnote{The source code and the files for the experimental evaluation are available on \href{https://osf.io/rbqd3/overview?view_only=09e34b73481c4fbb973b8158273cbc00}{OSF}~\cite{SUPPLEMENTARY_MATERIAL}.
The application can also be accessed online at \url{https://www.ac.tuwien.ac.at/projects/paged-geophylogenies/}.}
We used \texttt{Gurobi 13}~\cite{Gurobi} as ILP solver and based our ILP-based solutions on the publicly available implementation of \ILPCrossing by Klawitter et al.~\cite{GeoPhylo}.

\subparagraph*{Dataset.}
We use the available \emph{synthetic} instances of Klawitter et al.~\cite{KKS+.VGI.2025}, consisting of $570$ geophylogenies with up to one hundred sites.
These are three sets of randomly generated instances that differ in the way the sites are placed on the map.
The \emph{uniform} instances place them uniformly at random, for \emph{coastline} instances they are initially placed along a horizontal line at uniform distance and then perturbed in $x$- and $y$-direction to simulate a distribution along a coastline or river bank, and \emph{clustered} instances contain clusters of size between three and ten, placed uniformly at random on the map, where the sites for each cluster are placed inside a disk around its center.
For all instance types, the phylogenetic tree $T$ is constructed by iteratively merging sites or subtrees into a larger tree with a probability that is inverse to the distance between the sites or median position of the subtree.
For clustered instances, the trees are first created for each cluster separately before they are merged into the final tree $T$.

For space reasons, we focus here on the clustered and coastline instances, since they have a real-world motivation.
We provide the plots for uniform instances in \Cref{app:experiments}.

\subparagraph*{Setup and Implementation Details.}
All algorithms ran on a system with Intel Xeon E5-2640 v4 10-core processors at 2.40 GHz; the ILP solver was allowed to use all available cores, but our Python code is single-threaded. 
There was a hard memory limit of 96GB, but this was never an issue.
We set a solver timeout of one hour unless specified otherwise.
When reporting running times, we report the solve time, i.e., ignoring the time to build the model.
Although there was a noticeable build time for larger instances, it was negligible compared to the solve time. 
For the DP, we report wall-clock running times, including the time to precompute the lookup tables but ignoring the time required to read the instance.
We do not expect to query lowest common ancestor of every leaf pair, and, therefore, opted for a na{\"i}ve computation.
The DP from \Cref{sec:planar} is implemented using memoization (``top down'') rather than filling a table completely, since we expect many states to be irrelevant. %

\subsection{Comparison of the ILP Variants}

\subparagraph*{Integer Linear Programming is Fast Enough for Small Instances.}
\Cref{fig:formulations-running-times} shows the running time of each formulation on each instance.
For small instances ($n \leq 30$), all variants finish within ten seconds.
Starting at about $n=40$, the solver times out on some instances.

All formulations are slower than \ILPCrossing, which terminated within 100 seconds even on many of the largest instances~\cite{KKS+.VGI.2025}.
This is unsurprising, since we are extending an \NP-hard crossing minimization problem with a coloring problem.
Overall, clustered instances seem to be easier than coastline instances, possibly due to the favorable phylogenetic tree structure of clustered instances, which ensures that each cluster appears as a subtree of $T$.

As for the variants of the ILP, a Wilcoxon signed-rank test on these runtime data shows that \ILPColors and \ILPColorsRep perform significantly worse than \ILPColorsPop and \ILPColorsSym ($p\approx 0$), but within these pairs the difference is clearly not significant ($p>25\%$).
This grouping is particularly obvious in the Kaplan-Meier chart of clustered instances (\Cref{fig:kaplanmeier}).

\begin{figure}[t]
	\centering
	\includegraphics[page=1,width=\linewidth]{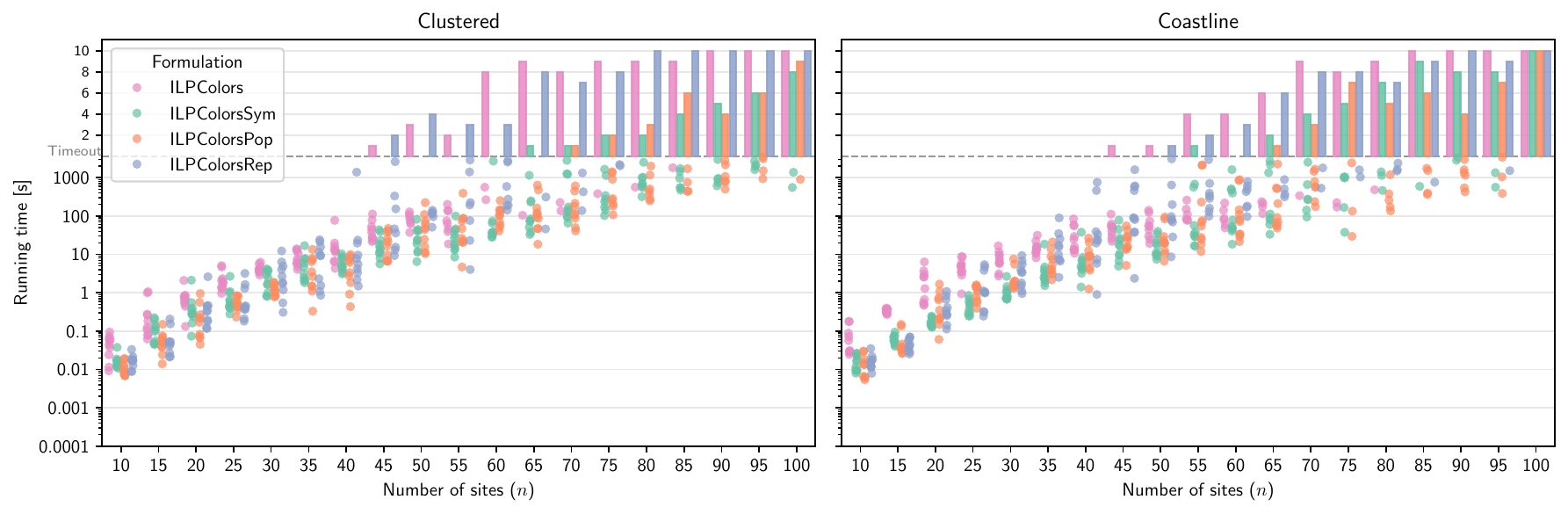}
	\caption{Solve times ($\log_{10}$-scale) of the different ILP formulations that terminated within one hour. We display above the plot the number of timed out instances for each formulation.}
	\label{fig:formulations-running-times}
\end{figure}

\begin{figure}[b]
	\centering
	\includegraphics[width=\linewidth]{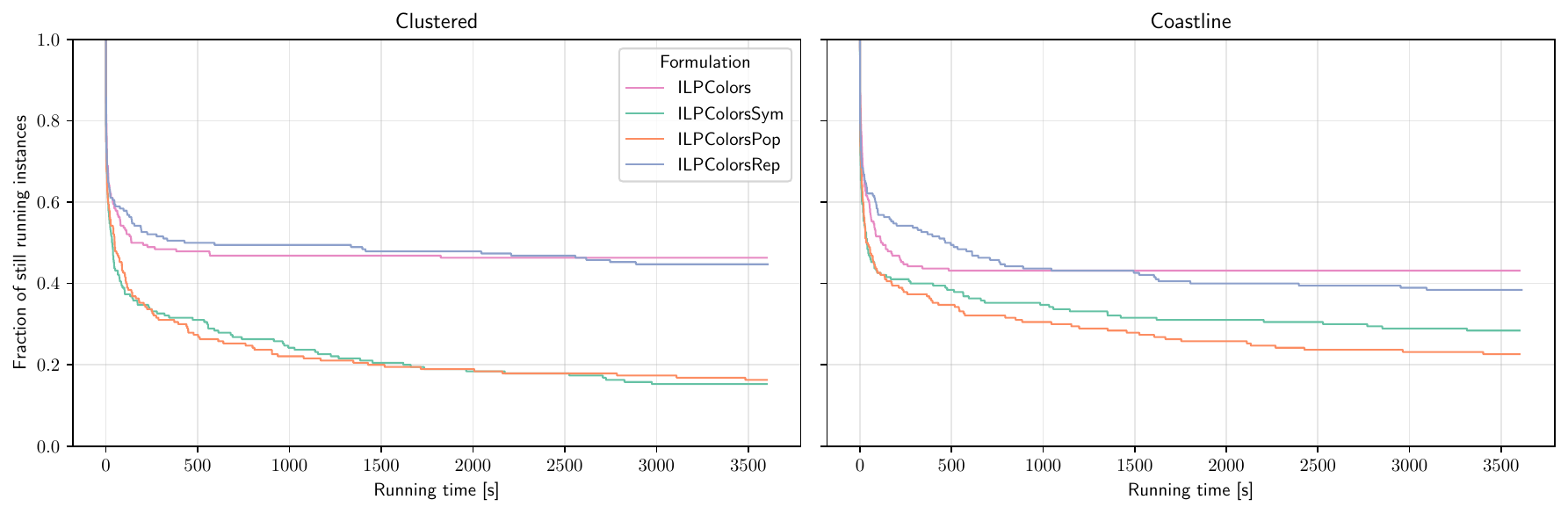}
	\caption{Survival plot of clustered and coastline instances; recall the timeout of 3600 seconds.}
    \label{fig:kaplanmeier}
\end{figure}

\subsection{The Impact of Tree-Consistent Colorings}
Next, we consider tree-consistent colorings and evaluate the running time of our DP as well as the number of additional colors needed to ensure tree-consistency.

\subparagraph*{The DP is Very Fast in Practice.}
\Cref{fig:treedp-running-times} shows the running times of dynamic program from \Cref{sec:tree-consistent} for computing the minimum number of colors required in a tree-consistent colorings.
We can see a stark contrast to the running times seen in \Cref{fig:formulations-running-times}: it runs within one second for nearly all instances.
The overall trend that clustered instances admit faster running times can be observed here as well---if less prominently. %
Moreover, \Cref{fig:treedp-running-times} indicates that the practical running time of our DP is far better than what the theoretical bound of $\BigO{n^7}$ might suggest. %
In particular, with running times in a fraction of a second, it is suited for interactive applications.
Recall that the algorithm behind \Cref{thm:tree-consistent} has to determine the planarity of many subtrees of $T$ on varying candidate ranges.
We have, by subsumption, thus also demonstrated that our \BigO{n^5}-time planarity test is fast in practice.
Although we are not aware of any implementations of the Klawitter et al.~$\BigO{n^6}$ planarity test~\cite{KKS+.VGI.2025},  we do not expect it to outperform ours.
The fast running time suggests that many of the states that exist in theory are never evaluated in practice.
Thus, we plot in \Cref{fig:treedp-states} for each performed planarity test the overall number of evaluated states.
We can observe that we only consider a tiny fraction of the (theoretical) $\BigO{n^4}$-many states.
Furthermore, we can notice a few ``prominent'' linear functions, which indicates that (in)feasibility was often detected after labeling some $k$ bottommost sites.

\begin{figure}[t]
	\centering
	\includegraphics[page=1,width=\linewidth]{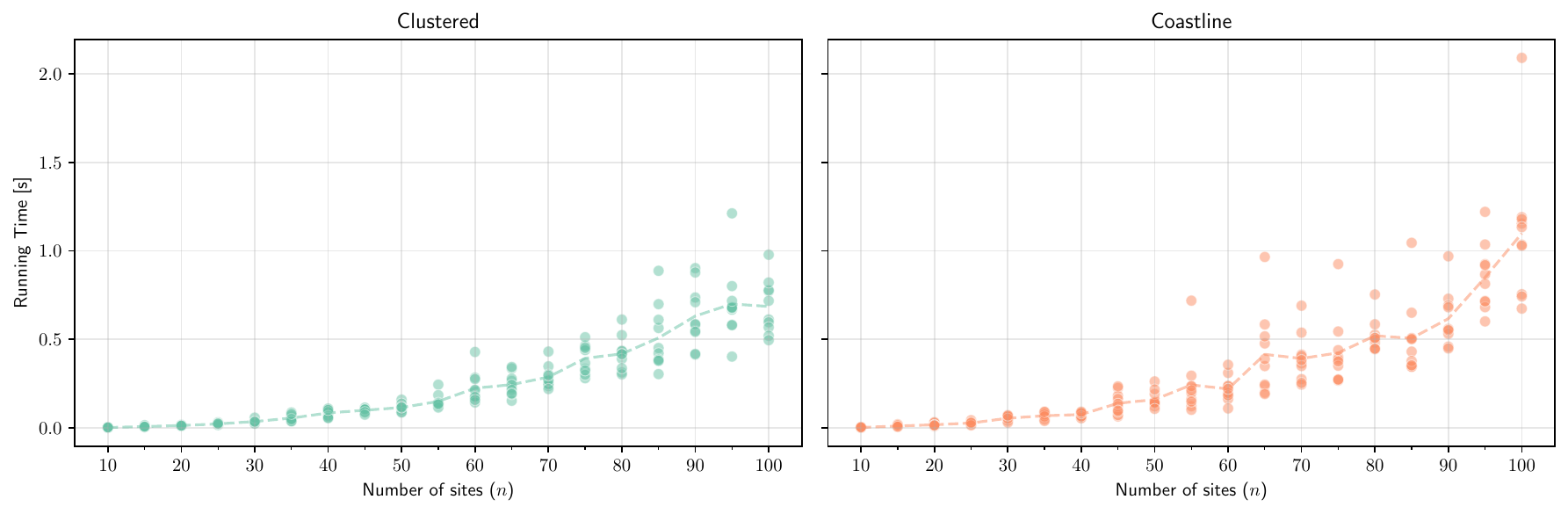}
	\caption{Wall-clock running times of our dynamic program from \Cref{sec:tree-consistent}. The dashed line depicts the average running time.}
	\label{fig:treedp-running-times}
\end{figure}

\begin{figure}[b]
	\centering
	\includegraphics[page=1,width=\linewidth]{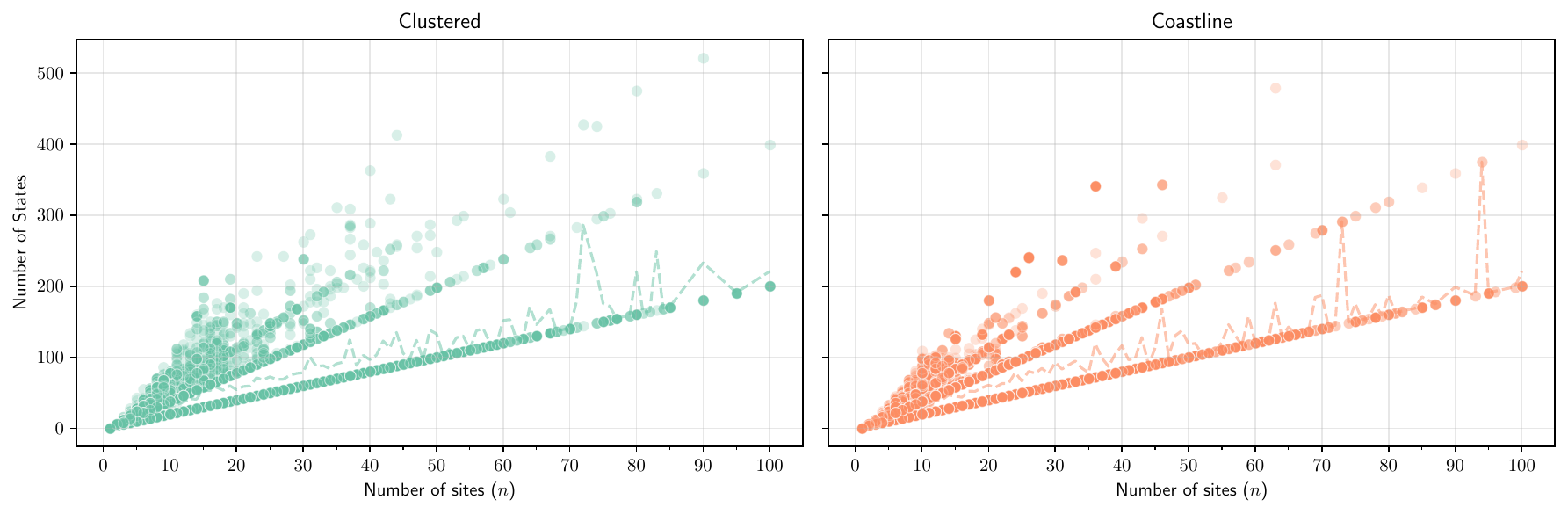}
	\caption{Number of evaluated states in the individual planarity tests performed by our dynamic program from \Cref{sec:tree-consistent}. The dashed line depicts the average. Note that instance-sizes that are not multiples of five arise due to considering also sub-instances.}
	\label{fig:treedp-states}
\end{figure}

\subparagraph*{The Price of Tree-Consistency.}
\onlyLong{Being tree-consistent is an additional requirement that, in general, increases the number of colors required to avoid monochromatic crossings.
However, it is unclear, and highly instance-dependent, to what extent this increases is present.}
In \Cref{fig:tree-dp-colors}, we plot the minimum number of colors required for a tree-consistent coloring for each of the instances.
Somewhat surprisingly, the number of required colors seems to increase linearly with the number of sites in the instance; see also the dashed line that shows the average number of colors for each $n$.
Moreover, coastline instance seem to require more colors in general, which indicates that their almost-horizontal placement of vertices is an obstruction to planarity.
To quantify this ``price'' of tree-consistency, we plot in \Cref{fig:tree-vs-normal-colors} the relative number of tree-consistent colors compared to the unconstrained coloring for instances with $n \leq 50$ (since for larger instances we do not necessarily have the true minimum colored embedding).
For our smallest instance, the price of tree-consistency is manageable, and we need on average less than twice as many colors.
For larger instances, not only the relative number of additional colors increases, but also the variance among the distances, which indicates a dependency of the site-placement on the planarity of sub-trees and thus the number of required colors in a tree-consistent coloring. %

\begin{figure}[t]
	\centering
	\includegraphics[page=1,width=\linewidth]{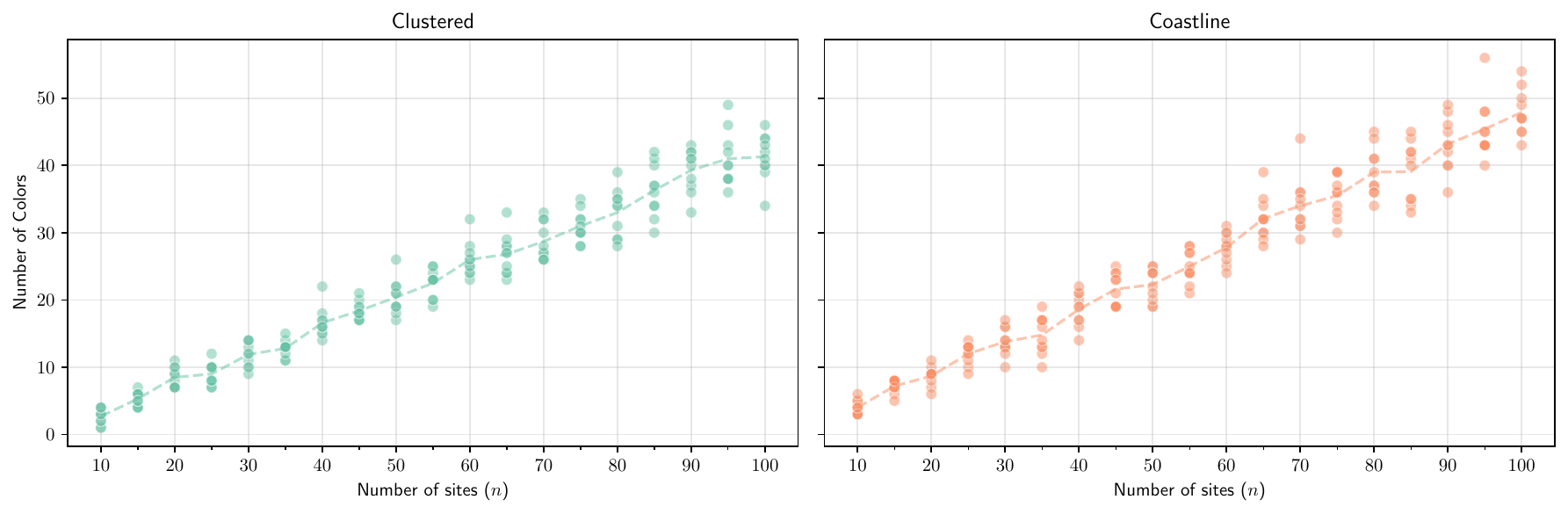}
	\caption{The number of colors of an optimal tree-consistent coloring.}
	\label{fig:tree-dp-colors}
\end{figure}

\begin{figure}[t]
	\centering
	\includegraphics[page=1,width=\linewidth]{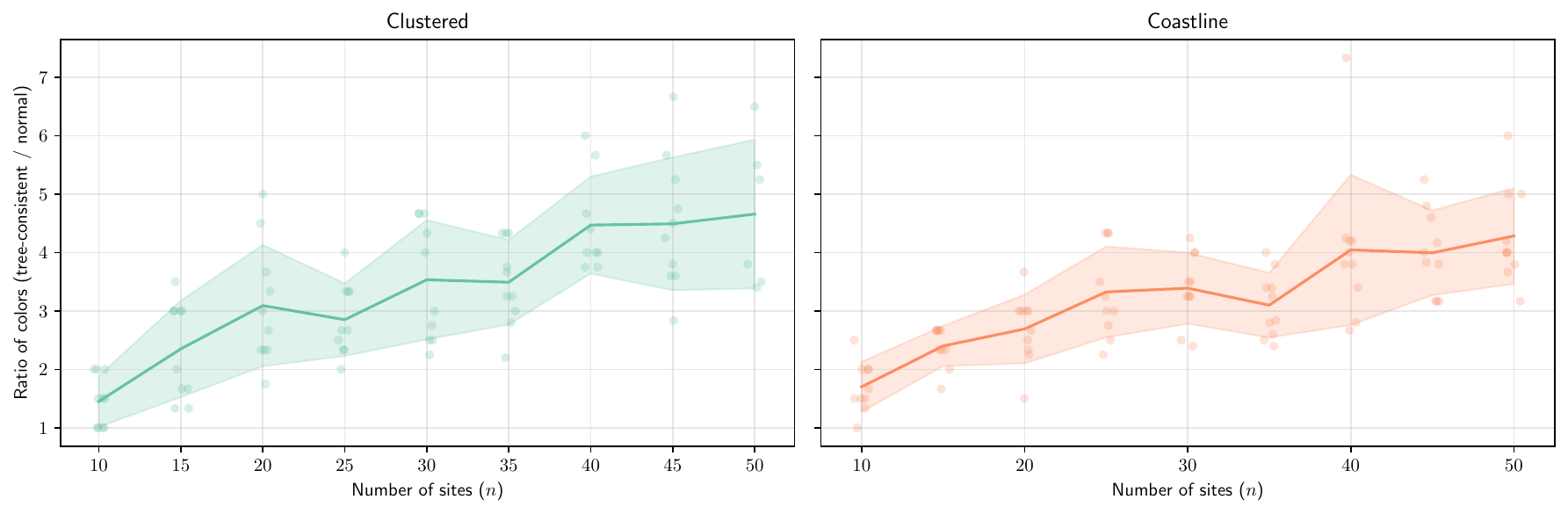}
	\caption{Relative number of colors required for an optimal tree-consistent coloring compared to an unconstrained coloring. Dots represent individual instances and the line shows the average relative number together with the standard deviation.}
	\label{fig:tree-vs-normal-colors}
\end{figure}

\subsection{The Benefit of Additional Colors}
Until now we considered two extreme cases: No colors but many (monochromatic) crossings and many colors but no monochromatic crossings.
With this final experiment, we investigate the tradeoff between these extremes and analyze the scenario where we only have a limited \emph{budget} of pages (colors), and have to accept some monochromatic crossings.

\subparagraph*{Extending the ILP.}
Recall that our ILP from \Cref{sec:ilp} is targeted at minimizing the number of colors required to avoid \emph{all} monochromatic crossings.
We now briefly mention how to extend it to minimize the number of monochromatic crossings for a limited budget of colors.
To this end, we do not introduce for every leaf $\ell_i$ $n$ color variables $\varphi_{i,c}$, $c \in [n]$, but only budget $b$-many.
Moreover, we introduce for every site-pair $s_i, s_j \in S$, $i < j$, and every color $c \in [b]$ a variable $\chi_{i,j}^c$, which is one if there is a monochromatic crossing between the leaders $\leader_i$ and $\leader_j$ in color $c$.
Finally, we add the constraint 
\begin{align}
	\chi_{i,j} + \varphi_{i,c} + \varphi_{j,c} - 2 \leq \chi_{i,j}^c &\qquad\text{for all}\ \ell_i < \ell_j \in L(T),\ i < j,\ c \in [b]\label{con:ilp-minimize-monochromatic-crossings},
\end{align}
which links the new variables to the old one.
Observe that Constraint~\ref{con:ilp-minimize-monochromatic-crossings} only enforces $\chi_{i,j}^c = 1$ if $\chi_{i,j} = \varphi_{i,c} = \varphi_{j,c} = 1$.
However, we minimize the sum over all $\chi_{i,j}^c$-variables.

\subparagraph*{Even a Few Colors Help a Lot.}
\begin{figure}[t]
	\centering
	\includegraphics[page=1,width=\linewidth]{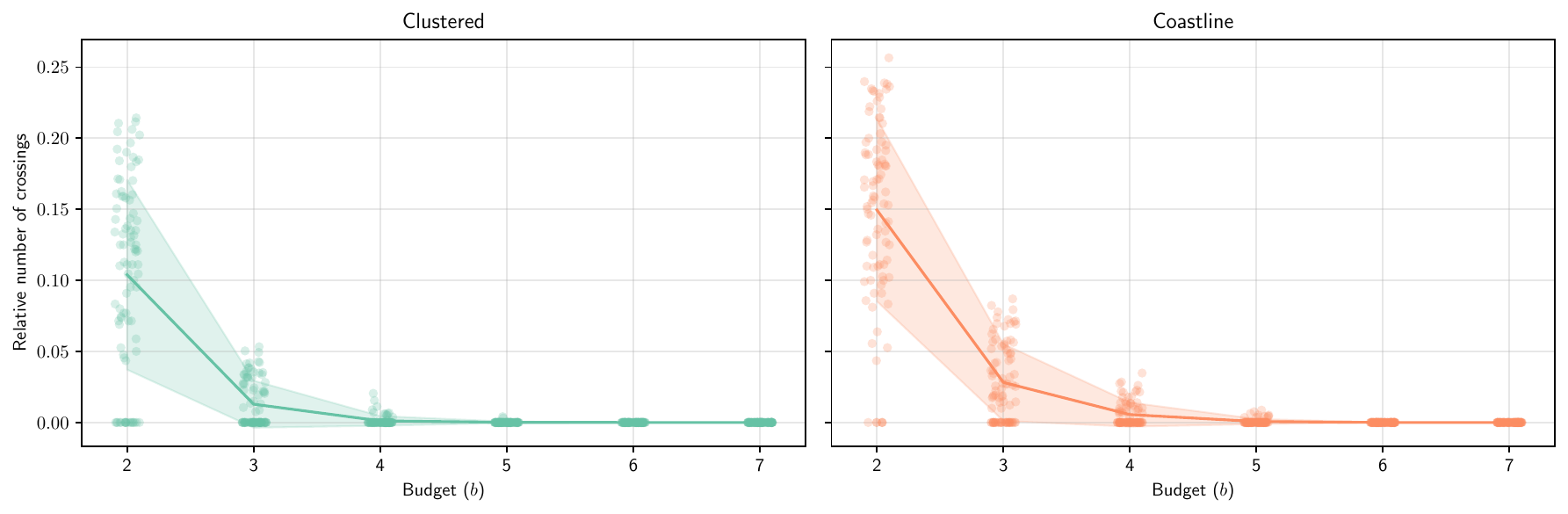}
	\caption{Relative number of monochromatic crossings compared to the crossing-minimal labeling for different budgets $b$. Dots represent individual instances and the line shows the average relative number together with the standard deviation.}
	\label{fig:tradeoff-all-colors}
\end{figure}

\begin{figure}[t]
	\centering
	\includegraphics[page=1,width=\linewidth]{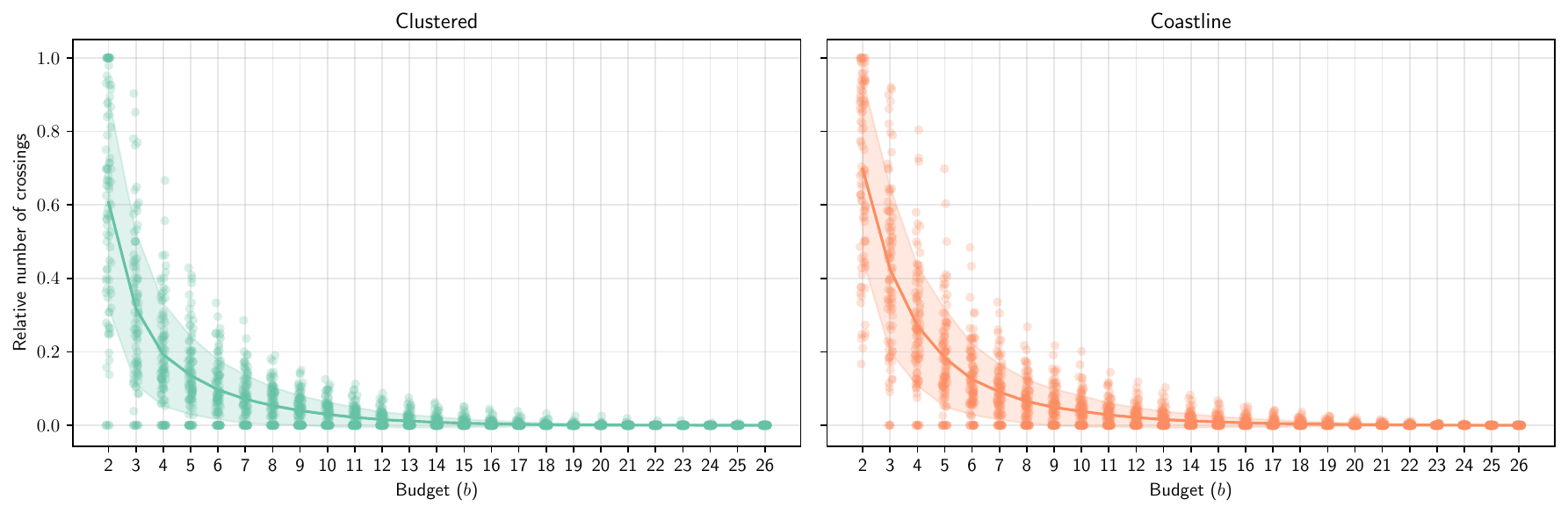}
	\caption{Relative number of crossings for tree-consistent colorings compared for different budgets~$b$. Symbols have the same meaning as in \Cref{fig:tradeoff-all-colors}.}
	\label{fig:tradeoff-tree-colors}
\end{figure}

We ran \ILPColorsSym with the above-described extension on every instance with $n \leq 50$ and for every color budget $b$, starting with $b = 1$, i.e., only one color, and we just minimize the overall number of crossing, and ending where the number of monochromatic crossings becomes zero.
Similarly, we re-ran \ILPColorsSym with the above-described extension but this time for tree-consistent colorings; recall \Cref{sec:ilp} where we describe how to extend \ILPColorsSym in this regard.
\Cref{fig:tradeoff-all-colors} shows, for general colorings, the relative number of monochromatic crossings for different budgets $b$ compared to their number for $b = 1$.
We can observe that already adding a single color brings us down to $10\%$ of the original monochromatic crossings on average.
With three and four colors, there are only relatively few initial monochromatic crossing left and with only seven colors, we could avoid all monochromatic crossings in every instance; for clustered and coastline instances this was already the case for six colors.

In \Cref{fig:tradeoff-tree-colors}, we perform the same comparison for tree-consistent colorings.
There, we can see that we need more colors to avoid all monochromatic crossings (up to 26).
However, also here three colors are sufficient to have only forty percent of the original monochromatic crossings.
With seven colors we only have around ten percent of the original colors and adding just three more colors, nearly all monochromatic crossings are gone.

Overall, we can see a clear tradeoff between minimizing the number of colors and the number of (monochromatic) crossings.
Somewhat surprisingly, it turns out that with just a few extra pages, we can substantially reduce the number of crossings on each page, and we believe that this should also improve the overall readability of the labeling.
As an example, \Cref{fig:tradeoff-fish} shows a version of the real-world instance of \Cref{fig:intro} using only two colors.
\NewText{
More generally, our experiments show that we can eliminate almost all monochromatic crossings by using just ten colors, which is of particular relevance for the overview drawings that we have discussed in \Cref{sec:extensions}.
Recall that research suggests using at most ten colors to encode nominal data reliably, in particular when used against varying backgrounds, such as on cartographic maps~\cite{War.CFC.2021}.
}

\begin{figure}[t]
	\centering
	\includegraphics[page=1]{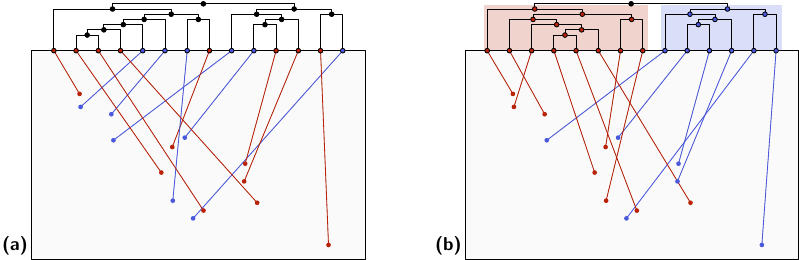}
	\caption{Colored embedding of the instance of \Cref{fig:intro} with the fewest monochromatic crossings for \textbf{\textsf{(a)}} unrestricted (2 crossings) and \textbf{\textsf{(b)}} tree-consistent colorings (7 crossings) with two colors.}
	\label{fig:tradeoff-fish}
\end{figure}

\section{Discussion and Concluding Remarks}
\label{sec:conclusion}
In this paper, we introduced the page metaphor for geophylogeny drawings: we color the leaders to avoid monochromatic crossings, i.e., each color class induces a page which features a planar sublabeling.
Our experimental evaluation shows that integer linear programming can be used to find an optimal solution in a few seconds for small instances, but can take minutes up to hours for instances with more than $50$ taxa.
If each color class is required to consist of a single subtree of the geophylogeny---and from a visualization perspective we argue it should---then optimal solutions can be computed in seconds even for the largest instances by dynamic programming.
Realistically, instances in boundary labeling have fewer than 50 sites~\cite{DNTW.Cbl.2025,NNR.RCL.2017}; even crossing-minimal labelings of large instances tend to be unreadable~\cite{KKS+.VGI.2025} and we expect that significantly larger instances remain unreadable even for colored leaders, hence our methods solve all reasonable instances efficiently.

Our experimental study also revealed that tree-consistent colorings require significantly more colors than unrestricted ones.
Therefore, we investigated the tradeoff between additional colors and the number of monochromatic crossings.
We show that by introducing at most five pages, only 20\% of the original (monochromatic) crossings remain in the tree-consistent case, and much fewer still in the unrestricted case.
This tradeoff, as well as the effectiveness of our coloring approach versus a paged approach, should be examined in a user study.
\NewText{In particular, we see conducting an expert evaluation with biologists which also evaluates the (dis)advantages of overview drawings and drawings of a single page as a natural next step.}

On the theory side, open questions include the complexity of the unrestricted colored embedding problem, %
and a better analysis of the worst-case runtime of the planarity DP.
Finally, this paper considers only straight-line leaders.
While we suspect that our results carry over for L-shaped ``po'' leaders, extending our work to other leader styles or radial maps is an interesting direction for future work.

\bibliography{references}

@InProceedings{Ung.kCC.1988,
  author    = {Walter Unger},
  booktitle = {Proc. 5th Symposium on Theoretical Aspects of Computer Science (STACS'88)},
  title     = {On the {$k$}-{C}olouring of {C}ircle-{G}raphs},
  year      = {1988},
  editor    = {Robert Cori and Martin Wirsing},
  pages     = {61--72},
  publisher = {Springer},
  series    = {Lecture Notes in Computer Science},
  volume    = {294},
  doi       = {10.1007/BFB0035832},
}

@Article{DKMW.GLG.2023,
  author  = {James Davies and Tomasz Krawczyk and Rose McCarty and Bartosz Walczak},
  journal = {Discrete \& Computational Geometry},
  title   = {Grounded L-Graphs Are Polynomially {\(\chi\)}-Bounded},
  year    = {2023},
  number  = {4},
  pages   = {1523--1550},
  volume  = {70},
  doi     = {10.1007/s00454-023-00592-z},
}

@article{kopperschmidt2024pooling,
  title={Pooling adjacent violators under interval constraints},
  author={Kopperschmidt, Kai and Stacevi{\v{c}}ius, Ruslanas},
  journal={Optimization Letters},
  volume={18},
  number={1},
  pages={257--277},
  year={2024},
  publisher={Springer}
}

@article{fish,
author = {Williams, Trevor J. and Johnson, Jerald B.},
title = {History predicts contemporary community diversity within a biogeographic province of freshwater fish},
journal = {Journal of Biogeography},
volume = {49},
number = {5},
pages = {809-821},
doi = {https://doi.org/10.1111/jbi.14316},
url = {https://onlinelibrary.wiley.com/doi/abs/10.1111/jbi.14316},
eprint = {https://onlinelibrary.wiley.com/doi/pdf/10.1111/jbi.14316},
year = {2022}
}

@InProceedings{NNR.RCL.2017,
  author    = {Niedermann, Benjamin and N{\"o}llenburg, Martin and Rutter, Ignaz},
  booktitle = {Proc. 10th IEEE Pacific Visualization Symposium (PacificVis)},
  title     = {{R}adial {C}ontour {L}abeling with {S}traight {L}eaders},
  year      = {2017},
  pages     = {295--304},
  publisher = {{IEEE} Computer Society},
  series    = {PacificVis '17},
  doi       = {10.1109/PACIFICVIS.2017.8031608},
}

@Book{BNN.ELF.2021,
  author    = {Michael A. Bekos and Benjamin Niedermann and Martin N{\"{o}}llenburg},
  publisher = {Springer},
  title     = {{E}xternal {L}abeling: {F}undamental {C}oncepts and {A}lgorithmic {T}echniques},
  year      = {2021},
  series    = {Synthesis Lectures on Visualization},
  doi       = {10.1007/978-3-031-02609-6},
}

@Article{AvKS.Lpm.1998,
  author  = {Agarwal, Pankaj K. and {van Kreveld}, Marc J. and Suri, Subhash},
  journal = {{Computational Geometry}},
  title   = {{Label placement by maximum independent set in rectangles}},
  year    = {1998},
  issn    = {09257721},
  number  = {3-4},
  pages   = {209--218},
  volume  = {11},
  doi     = {10.1016/S0925-7721(98)00028-5},
}

@Article{vKSW.Pls.1999,
  author  = {{van Kreveld}, Marc J. and Strijk, Tycho and Wolff, Alexander},
  journal = {{Computational Geometry}},
  title   = {{Point labeling with sliding labels}},
  year    = {1999},
  issn    = {09257721},
  number  = {1},
  pages   = {21--47},
  volume  = {13},
  doi     = {10.1016/S0925-7721(99)00005-X},
}

@Book{Bri.RGS.1990,
  author    = {Mary Helen Briscoe},
  publisher = {Springer Science \& Business Media},
  title     = {{A} {R}esearcher's {G}uide to {S}cientific and {M}edical {I}llustrations},
  year      = {1990},
  doi       = {10.1007/978-1-4684-0355-8},
}

@Article{BKNS.BLO.2010,
  author  = {Michael A. Bekos and Michael Kaufmann and Martin N{\"{o}}llenburg and Antonios Symvonis},
  journal = {Algorithmica},
  title   = {{B}oundary {L}abeling with {O}ctilinear {L}eaders},
  year    = {2010},
  number  = {3},
  pages   = {436--461},
  volume  = {57},
  doi     = {10.1007/s00453-009-9283-6},
}

@Article{BHKN.AMC.2009,
  author  = {Marc Benkert and Herman J. Haverkort and Moritz Kroll and Martin N{\"{o}}llenburg},
  journal = {Journal of Graph Algorithms and Applications (JGAA)},
  title   = {{A}lgorithms for {M}ulti-{C}riteria {B}oundary {L}abeling},
  year    = {2009},
  number  = {3},
  pages   = {289--317},
  volume  = {13},
  doi     = {10.7155/jgaa.00189},
}

@Article{BKSW.BlM.2007,
  author  = {Michael A. Bekos and Michael Kaufmann and Antonios Symvonis and Alexander Wolff},
  journal = {Computational Geometry},
  title   = {{B}oundary labeling: {M}odels and efficient algorithms for rectangular maps},
  year    = {2007},
  number  = {3},
  pages   = {215--236},
  volume  = {36},
  doi     = {10.1016/j.comgeo.2006.05.003},
}

@InProceedings{KLW.LBL.2014,
  author    = {Philipp Kindermann and Fabian Lipp and Alexander Wolff},
  booktitle = {Proc. 22nd International Symposium Graph Drawing and Network Visualization (GD)},
  title     = {{L}uatodonotes: {B}oundary {L}abeling for {A}nnotations in {T}exts},
  year      = {2014},
  pages     = {76--88},
  publisher = {Springer},
  series    = {Lecture Notes in Computer Science (LNCS)},
  volume    = {8871},
  doi       = {10.1007/978-3-662-45803-7_7},
}

@InProceedings{HPL.BLF.2014,
  author    = {Zhi{-}Dong Huang and Sheung{-}Hung Poon and Chun{-}Cheng Lin},
  booktitle = {Proc. 8th International Conference and Workshops on Algorithms and Computation (WALCOM)},
  title     = {{B}oundary {L}abeling with {F}lexible {L}abel {P}ositions},
  year      = {2014},
  pages     = {44--55},
  publisher = {Springer},
  series    = {Lecture Notes in Computer Science (LNCS)},
  volume    = {8344},
  doi       = {10.1007/978-3-319-04657-0_7},
}

@Article{BGNN.rlb.2019,
  author  = {Lukas Barth and Andreas Gemsa and Benjamin Niedermann and Martin N{\"{o}}llenburg},
  journal = {Information Visualization},
  title   = {On the readability of leaders in boundary labeling},
  year    = {2019},
  number  = {1},
  pages   = {110--132},
  volume  = {18},
  doi     = {10.1177/1473871618799500},
}

@Article{GAH.Aep.2023,
  author  = {Sven Gedicke and Lukas Arzoumanidis and Jan-Henrik Haunert},
  journal = {Cartography and Geographic Information Science (CaGIS)},
  title   = {Automating the external placement of symbols for point features in situation maps for emergency response},
  year    = {2023},
  number  = {4},
  pages   = {385--402},
  volume  = {50},
  doi     = {10.1080/15230406.2023.2213446},
}

@InProceedings{KKS+.VGI.2023,
  author    = {Jonathan Klawitter and Felix Klesen and Joris Y. Scholl and Thomas C. van Dijk and Alexander Zaft},
  booktitle = {Proc. 12th International Conference Geographic Information Science (GIScience)},
  title     = {{V}isualizing {G}eophylogenies - {I}nternal and {E}xternal {L}abeling with {P}hylogenetic {T}ree {C}onstraints},
  year      = {2023},
  pages     = {5:1--5:16},
  publisher = {Schloss Dagstuhl - Leibniz-Zentrum f{\"{u}}r Informatik},
  series    = {Leibniz International Proceedings in Informatics (LIPIcs)},
  volume    = {277},
  doi       = {10.4230/LIPIcs.GIScience.2023.5},
}

@InProceedings{Pur.WAh.1997,
  author    = {Helen C. Purchase},
  booktitle = {Proc. 5th International Symposium on Graph Drawing and Network Visualization (GD'97)},
  title     = {Which Aesthetic has the Greatest Effect on Human Understanding?},
  year      = {1997},
  editor    = {Giuseppe Di Battista},
  pages     = {248--261},
  publisher = {Springer},
  series    = {Lecture Notes in Computer Science},
  volume    = {1353},
  doi       = {10.1007/3-540-63938-1_67},
}

@InProceedings{BNT+.BLC.2024,
  author    = {Bonerath, Annika and Nöllenburg, Martin and Terziadis, Soeren and Wallinger, Markus and Wulms, Jules},
  booktitle = {Proc. 32nd International Symposium on Graph Drawing and Network Visualization (GD'24)},
  title     = {Boundary Labeling in a Circular Orbit},
  year      = {2024},
  editor    = {Stefan Felsner and Karsten Klein},
  pages     = {22:1--22:17},
  publisher = {Schloss Dagstuhl – Leibniz-Zentrum für Informatik},
  series    = {LIPIcs},
  volume    = {320},
  doi       = {10.4230/LIPICS.GD.2024.22},
}

@Article{DNTW.Cbl.2025,
  author  = {Depian, Thomas and Nöllenburg, Martin and Terziadis, Soeren and Wallinger, Markus},
  journal = {Computational Geometry},
  title   = {Constrained boundary labeling},
  year    = {2025},
  pages   = {102191},
  volume  = {129},
  doi     = {10.1016/j.comgeo.2025.102191},
}

@InProceedings{DNTW.CBL.2024,
  author    = {Depian, Thomas and Nöllenburg, Martin and Terziadis, Soeren and Wallinger, Markus},
  booktitle = {Proc. 35th International Symposium on Algorithms and Computation (ISAAC'24)},
  title     = {Constrained Boundary Labeling},
  year      = {2024},
  editor    = {Juli{\'{a}}n Mestre and Anthony Wirth},
  pages     = {26:1--26:16},
  publisher = {Schloss Dagstuhl – Leibniz-Zentrum für Informatik},
  series    = {LIPIcs},
  volume    = {322},
  doi       = {10.4230/LIPICS.ISAAC.2024.26},
}

@Article{KKS+.VGI.2025,
  author  = {Klawitter, Jonathan and Klesen, Felix and Scholl, Joris Y. and Van Dijk, Thomas C. and Zaft, Alexander},
  journal = {Journal of Graph Algorithms and Applications},
  title   = {Visualizing Geophylogenies - Internal and External Labeling with Phylogenetic Tree Constraints},
  year    = {2025},
  number  = {1},
  pages   = {29--61},
  volume  = {29},
  doi     = {10.7155/jgaa.v29i1.2975},
}

@Article{GBNH.ZME.2021,
  author  = {Sven Gedicke and Annika Bonerath and Benjamin Niedermann and Jan{-}Henrik Haunert},
  journal = {{IEEE} Transactions on Visualization and Computer Graphics},
  title   = {{Z}oomless {M}aps: {E}xternal {L}abeling {M}ethods for the {I}nteractive {E}xploration of {D}ense {P}oint {S}ets at a {F}ixed {M}ap {S}cale},
  year    = {2021},
  number  = {2},
  pages   = {1247--1256},
  volume  = {27},
  doi     = {10.1109/TVCG.2020.3030399},
}

@Article{FKP.Ctv.2010,
  author  = {Fernau, Henning and Kaufmann, Michael and Poths, Mathias},
  journal = {Journal of Computer and System Sciences},
  title   = {Comparing trees via crossing minimization},
  year    = {2010},
  number  = {7},
  pages   = {593--608},
  volume  = {76},
  doi     = {10.1016/j.jcss.2009.10.014},
}

@Article{DLM.SGD.2019,
  author  = {Didimo, Walter and Liotta, Giuseppe and Montecchiani, Fabrizio},
  journal = {ACM Computing Surveys},
  title   = {A Survey on Graph Drawing Beyond Planarity},
  year    = {2019},
  number  = {1},
  pages   = {1--37},
  volume  = {52},
  doi     = {10.1145/3301281},
}

@Article{DM.RGT.2018,
  author    = {Stephane Durocher and Debajyoti Mondal},
  journal   = {{SIAM} J. Discret. Math.},
  title     = {Relating Graph Thickness to Planar Layers and Bend Complexity},
  year      = {2018},
  number    = {4},
  pages     = {2703--2719},
  volume    = {32},
  _url      = {https://doi.org/10.1137/16M1110042},
  bibsource = {dblp computer science bibliography, https://dblp.org},
  biburl    = {https://dblp.org/rec/journals/siamdm/DurocherM18.bib},
  doi       = {10.1137/16M1110042},
  timestamp = {Sat, 25 Apr 2020 13:56:10 +0200},
}

@Article{MOS.TGS.1998,
  author    = {Petra Mutzel and Thomas Odenthal and Mark Scharbrodt},
  journal   = {Graphs Comb.},
  title     = {The Thickness of Graphs: {A} Survey},
  year      = {1998},
  number    = {1},
  pages     = {59--73},
  volume    = {14},
  _url      = {https://doi.org/10.1007/PL00007219},
  bibsource = {dblp computer science bibliography, https://dblp.org},
  biburl    = {https://dblp.org/rec/journals/gc/MutzelOS98.bib},
  doi       = {10.1007/PL00007219},
  timestamp = {Tue, 29 Dec 2020 18:23:55 +0100},
}

@InProceedings{JRRS.GT2.2023,
  author    = {Rahul Jain and Marco Ricci and Jonathan Rollin and Andr{\'{e}} Schulz},
  booktitle = {Proc. 39th International Symposium on Computational Geometry (SoCG'23)},
  title     = {On the Geometric Thickness of 2-Degenerate Graphs},
  year      = {2023},
  editor    = {Erin W. Chambers and Joachim Gudmundsson},
  pages     = {44:1--44:15},
  publisher = {Schloss Dagstuhl - Leibniz-Zentrum f{\"{u}}r Informatik},
  series    = {LIPIcs},
  volume    = {258},
  doi       = {10.4230/LIPICS.SOCG.2023.44},
}

@Article{GHN.MBL.2015,
  author  = {Andreas Gemsa and Jan{-}Henrik Haunert and Martin N{\"{o}}llenburg},
  journal = {{ACM} Transactions on Spatial Algorithms and Systems (TSAS)},
  title   = {{M}ultirow {B}oundary-{L}abeling {A}lgorithms for {P}anorama {I}mages},
  year    = {2015},
  number  = {1},
  pages   = {1:1--1:30},
  volume  = {1},
  doi     = {10.1145/2794299},
}

@article{GT.LTA.1985,
  author       = {Harold N. Gabow and
                  Robert Endre Tarjan},
  title        = {A Linear-Time Algorithm for a Special Case of Disjoint Set Union},
  journal      = {Journal of Computer and System Sciences},
  volume       = {30},
  number       = {2},
  pages        = {209--221},
  year         = {1985},
  doi          = {10.1016/0022-0000(85)90014-5}
}

@Article{MDZ.PAG.2001,
  author  = {Méndez Díaz -, Isabel and Zabala, Paula},
  journal = {Electronic Notes in Discrete Mathematics},
  title   = {A Polyhedral Approach for Graph Coloring},
  year    = {2001},
  pages   = {178--181},
  volume  = {7},
  doi     = {10.1016/s1571-0653(04)00254-9},
}

@InProceedings{JM.NIL.2018,
  author    = {Jabrayilov, Adalat and Mutzel, Petra},
  booktitle = {Proc. 13th Latin American Symposium on Theoretical Informatics (LATIN'18)},
  title     = {New Integer Linear Programming Models for the Vertex Coloring Problem},
  year      = {2018},
  pages     = {640--652},
  publisher = {Springer International Publishing},
  doi       = {10.1007/978-3-319-77404-6_47},
}

@Article{CCC.arf.2008,
  author    = {Campêlo, Manoel and Campos, Victor A. and Corrêa, Ricardo C.},
  journal   = {Discrete Applied Mathematics},
  title     = {On the asymmetric representatives formulation for the vertex coloring problem},
  year      = {2008},
  issn      = {0166-218X},
  month     = Apr,
  number    = {7},
  pages     = {1097--1111},
  volume    = {156},
  doi       = {10.1016/j.dam.2007.05.058},
  publisher = {Elsevier BV},
}

@Misc{Gurobi,
  author = {{Gurobi Optimization, LLC}},
  note   = {Accessed on 2026-05-06},
  title  = {{Gurobi Optimizer Reference Manual}},
  year   = {2026},
  url    = {https://www.gurobi.com},
}

@Misc{GeoPhylo,
  author    = {Jonathan Klawitter and Felix Klesen and Joris Y. Scholl and Thomas C. van Dijk and Alexander Zaft},
  note   = {Accessed on 2026-05-10},
  title  = {Algorithms to Visualize Geophylogenies -- GitHub Repository},
  year   = {2026},
  url    = {https://github.com/joklawitter/geophylo},
}

@article{Tahami2021,
  title = {Molecular phylogeny of cave dwelling {E}remogryllodes crickets (Orthoptera, Myrmecophilidae) across {Z}agros Mountains and {S}outhern {I}ran},
  volume = {50},
  DOI = {10.5038/1827-806x.50.2.2360},
  number = {2},
  journal = {International Journal of Speleology},
  author = {Tahami,  Mohadeseh Sadat and Hojat-Ansari,  Mina and Namyatova,  Anna and Sadeghi,  Saber},
  year = {2021},
  pages = {213–221}
}

@book{Gol.AGT.1980,
  author = {Martin Charles Golumbic},
  title = {Algorithmic Graph Theory and Perfect Graphs},
  doi = {10.1016/c2013-0-10739-8},
  publisher = {Elsevier},
  year = {1980}
}

@Article{Mac.Uce.1999,
  author  = {Lindsay W. MacDonald},
  journal = {IEEE Computer Graphics and Applications},
  title   = {Using color effectively in computer graphics},
  year    = {1999},
  number  = {4},
  pages   = {20--35},
  volume  = {19},
  doi     = {10.1109/38.773961},
}

@InCollection{War.CFC.2021,
  author    = {Ware, Colin},
  booktitle = {Information Visualization},
  publisher = {Morgan Kaufmann},
  title     = {Chapter Four - {C}olor},
  year      = {2021},
  chapter   = {four},
  edition   = {fourth},
  pages     = {95--141},
  doi       = {10.1016/b978-0-12-812875-6.00004-9},
}

@Misc{SUPPLEMENTARY_MATERIAL,
  author = {Thomas Depian and Thomas C. van Dijk and Nöllenburg, Martin},
  note   = {Source Code},
  title  = {Paged Geophylogenies: {A} Coloring Approach to
External Labeling with Tree Constraints -- {S}upplementary Material},
  year   = {2026},
  doi    = {10.17605/OSF.IO/RBQD3},
}

@inproceedings{dffgn-ptgt-25,
	Author = {Depian, Thomas and Fink, Simon D. and Firbas, Alexander and Ganian, Robert and Nöllenburg, Martin},
	Title = {Pathways to Tractability for Geometric Thickness},
	Booktitle = {Proc. 50th Conference on Current Trends in Theory and Practice of Computer Science (SOFSEM'25)},
	Year = {2025},
    Editor = {Kr{\'{a}}lovic, Rastislav and Kurkov{\'{a}}, Vera},
    Volume = {15538},
    Pages = {209--224},
    Series = {LNCS},
    Publisher = {Springer},
    Doi = {10.1007/978-3-031-82670-2_16},
}

\newpage
\appendix
\onlyShort{

\crefalias{section}{appendix}

\section{Omitted Details from \Cref{sec:tanglegram}}
\label{app:tanglegram}

We now give the full details for the following result:
\thmMLDS*
\label{thm:MLDS*}

\mldsDetails

\thmMLDS*

\section{Omitted Details from \Cref{sec:tree-consistent}}
\label{app:tree-consistent}
\subsection{Omitted Details from \Cref{sec:planar}}
We now provide the full details for our \BigO{n^5}-time DP to test planarity of a geophylogeny.

\planarSetup

\planarEfficient

\subsection{Omitted Proofs}
\planarCorrectness

\theoremPlanar*
\label{thm:planar*}
\ptheoremPlanar

\lemmaConsistentPlanar*
\label{lem:tree-consistent-planar*}
\plemmaConsistentPlanar

\theoremTreeConsistent*
\label{thm:tree-consistent*}
\ptheoremTreeConsistent

\section{Omitted Proofs from \Cref{sec:extensions}}

\theoremPageFixed*
\label{thm:page-fixed*}
\ptheoremPageFixed

\theoremPageSliding*
\label{thm:page-sliding*}
\ptheoremPageSliding

\newpage

}

\section{Omitted Plots from \Cref{sec:experiments}}
\label[appendix]{app:experiments}

\begin{figure}[!ht]
	\centering
	\includegraphics[page=1,width=\linewidth]{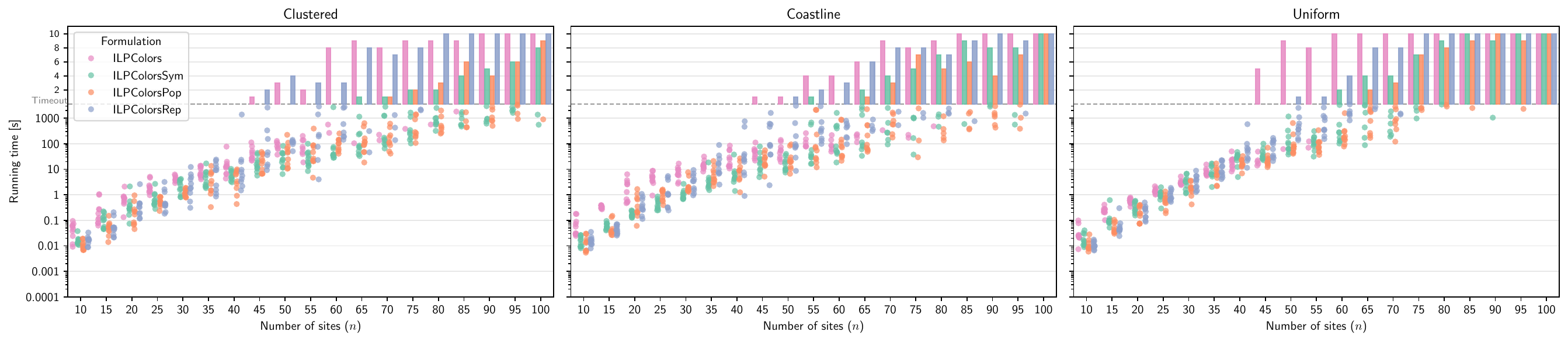}
	\caption{Same plot as \Cref{fig:formulations-running-times} but including uniform instances.}
	\label{fig:formulations-running-times-all}
\end{figure}

\begin{figure}[!ht]
	\centering
	\includegraphics[width=\linewidth]{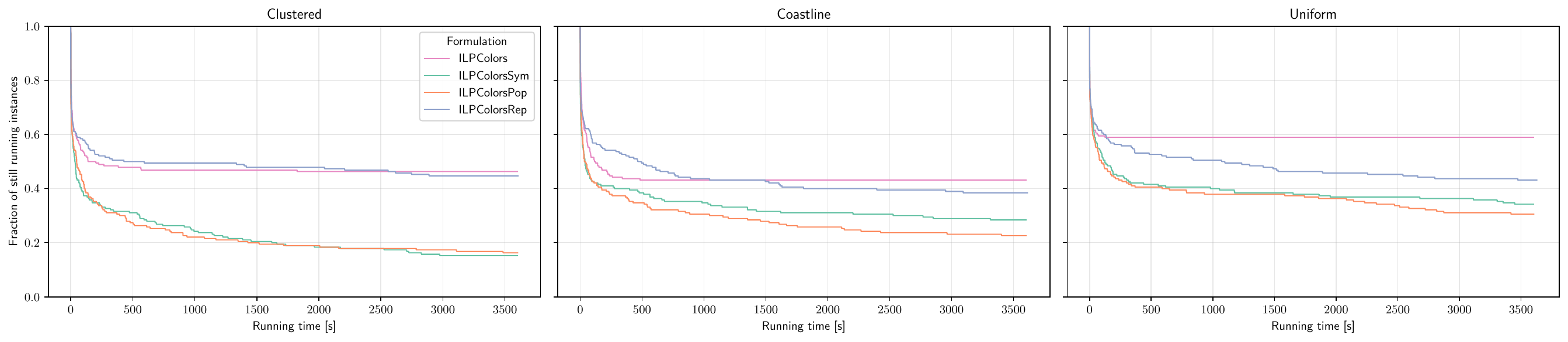}
	\caption{Same plot as \Cref{fig:kaplanmeier} but including uniform instances.}
    \label{fig:kaplanmeier-all}
\end{figure}

\begin{figure}[!ht]
	\centering
	\includegraphics[page=1,width=\linewidth]{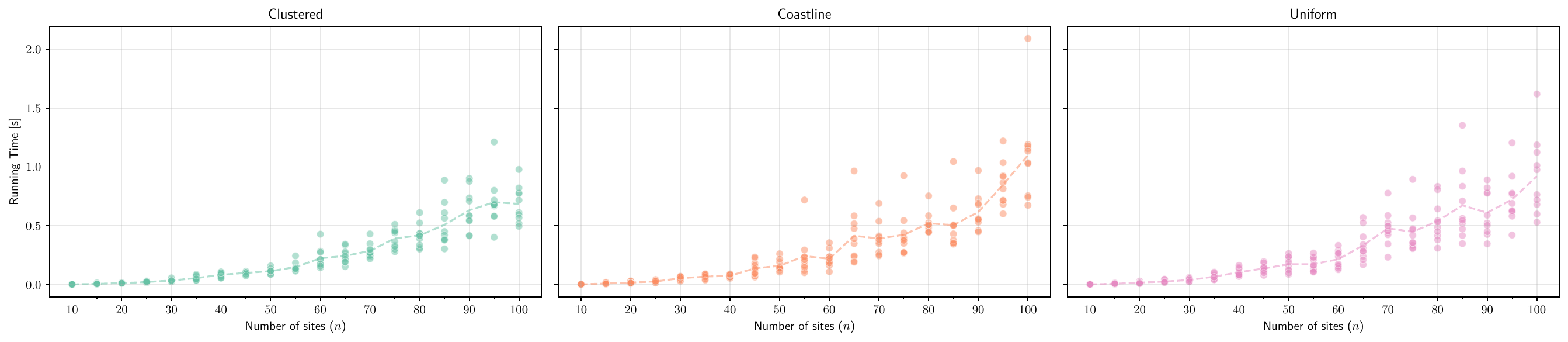}
	\caption{Same plot as \Cref{fig:treedp-running-times} but including uniform instances.}
	\label{fig:treedp-running-times-all}
\end{figure}

\begin{figure}[!ht]
	\centering
	\includegraphics[width=\linewidth]{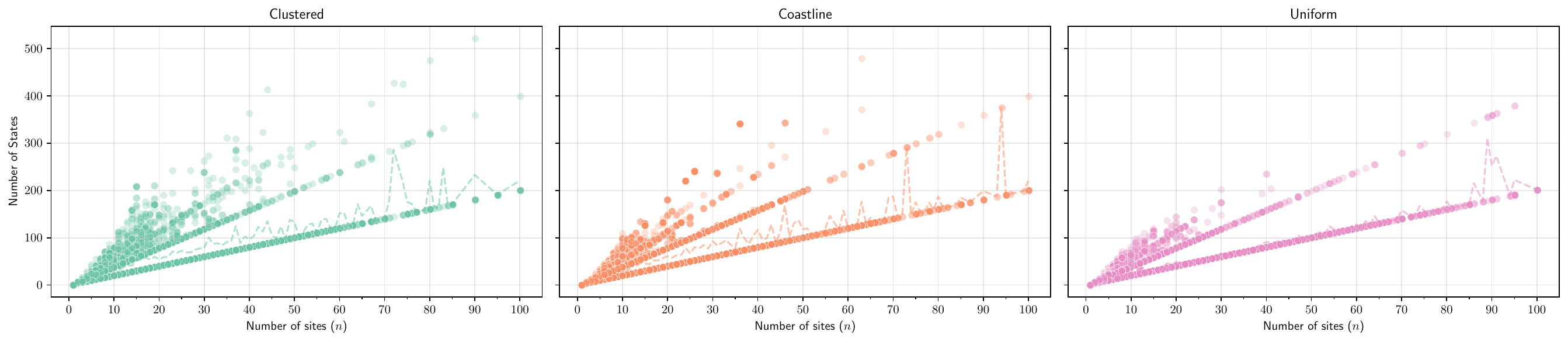}
	\caption{Same plot as \Cref{fig:treedp-states} but including uniform instances.}
    \label{fig:treedp-states-all}
\end{figure}

\begin{figure}[!ht]
	\centering
	\includegraphics[page=1,width=\linewidth]{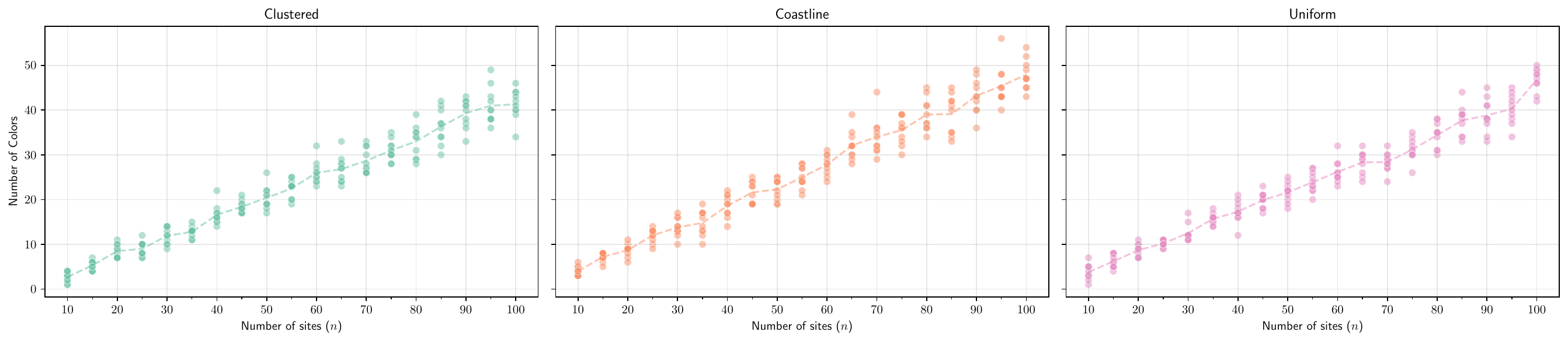}
	\caption{Same plot as \Cref{fig:tree-dp-colors} but including uniform instances.}
	\label{fig:tree-dp-colors-all}
\end{figure}

\begin{figure}[!ht]
	\centering
	\includegraphics[page=1,width=\linewidth]{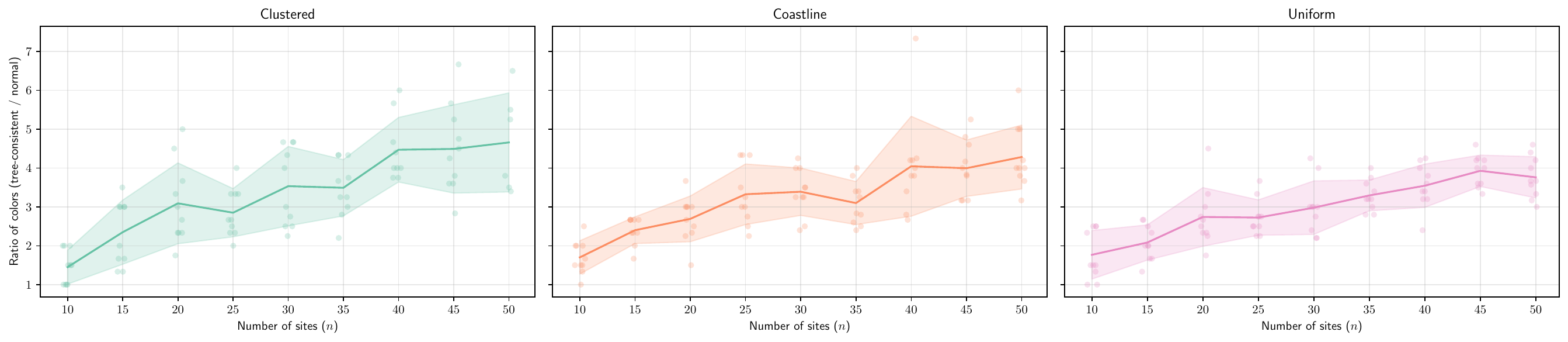}
	\caption{Same plot as \Cref{fig:tree-vs-normal-colors} but including uniform instances.}
	\label{fig:tree-vs-normal-colors-all}
\end{figure}

\begin{figure}[!ht]
	\centering
	\includegraphics[page=1,width=\linewidth]{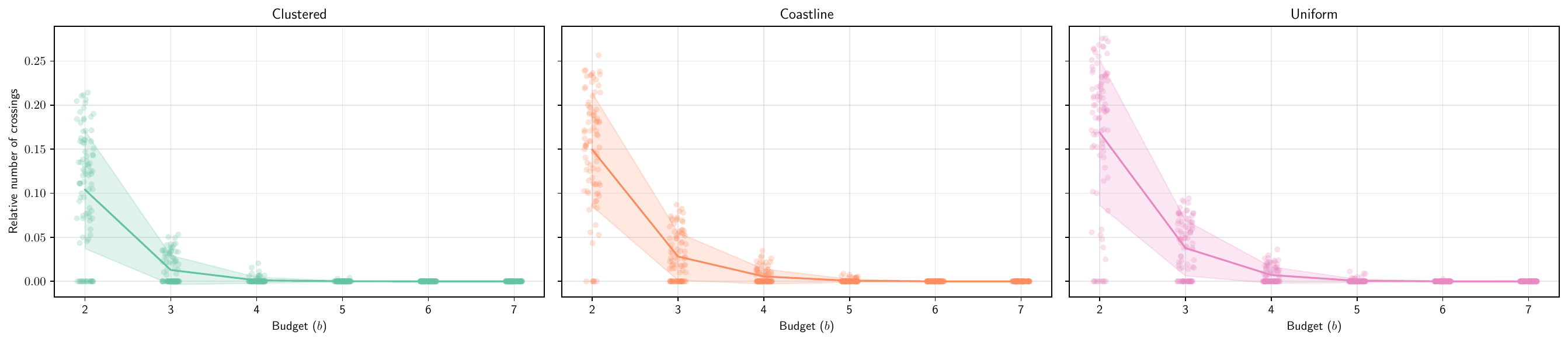}
	\caption{Same plot as \Cref{fig:tradeoff-all-colors} but including uniform instances.}
	\label{fig:tradeoff-all-colors-all}
\end{figure}

\begin{figure}[!ht]
	\centering
	\includegraphics[page=1,width=\linewidth]{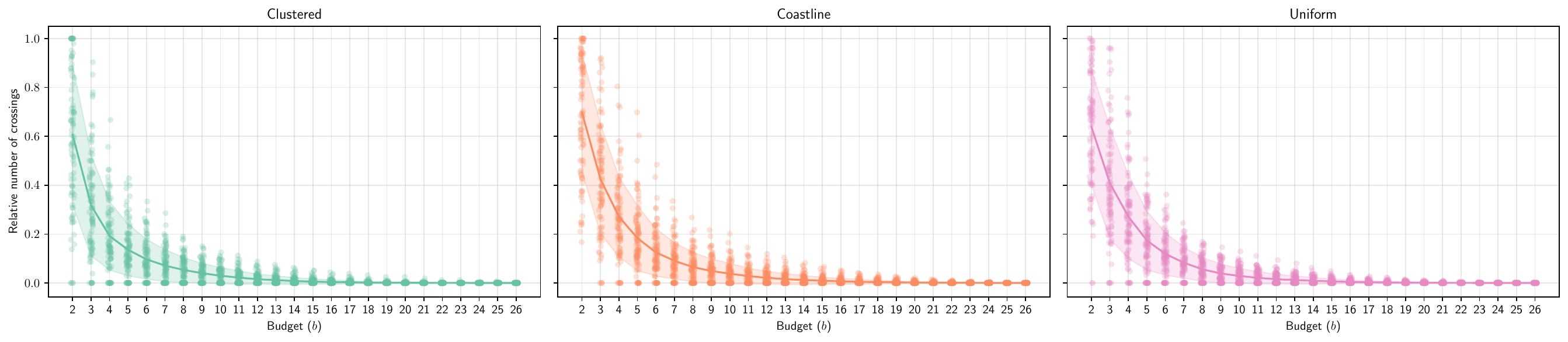}
	\caption{Same plot as \Cref{fig:tradeoff-tree-colors} but including uniform instances.}
	\label{fig:tradeoff-tree-colors-all}
\end{figure}

\end{document}